\documentclass[11pt,a4paper]{article}

\usepackage{xcolor}
\title{Design of MDP Convolutional Codes and Maximally Recoverable Codes Through the Lens of Matrix Completion}

\usepackage{authblk}
\usepackage{amsmath}
\usepackage{amssymb}
\usepackage{xcolor}
\usepackage{amsthm}
\usepackage{comment}
\usepackage{mathdots}
\usepackage[hidelinks]{hyperref}

\newtheorem{theorem}{Theorem}

\newtheorem{lemma}[theorem]{Lemma}

\newtheorem{remark}{Remark}
\newtheorem{definition}{Definition}
\newtheorem{corollary}{Corollary}

\author[1]{Sakshi Dang}
\author[2]{Julia Lieb}
\author[3]{Pedro Soto}
\author[4]{Alex Sprintson}
\affil[1]{Indian Institute of Science, Bengaluru, India\thanks{sakshidang10@gmail.com
}}
\affil[2]{TU Ilmenau, Germany\thanks{julia.lieb@tu-ilmenau.de}}
\affil[3]{Virginia Tech,  USA\thanks{pedro.juan.soto.conde@gmail.com}}
\affil[4]{George Mason University, USA\thanks{asprints@gmu.edu}}

\date{}
\usepackage{amsmath}
\usepackage{amssymb}
\begin{document}

\maketitle

\begin{abstract}
The matrix completion problem provides a unifying lens through which many fundamental problems in coding theory can be viewed. 
In this paper, we investigate  Locally Recoverable Codes (LRCs)  with Maximal Recoverability (MR) 
and Maximum Distance Profile (MDP) convolutional codes in the framework of matrix completion.
 In particular, we present techniques that are general enough to provide constructions for both types of codes. 
  A common feature of our code constructions is the sparsity of their generator matrices and the property that a large number of the entries of the generator matrices are elements of a small subfield of a larger extension field. 
\end{abstract}

\medskip

\noindent \textbf{Key Words : } Matrix Completion, Convolutional codes, Maximally recoverable codes.

\medskip

\section{Introduction}

The \emph{matrix completion problem} provides a powerful and unifying perspective for understanding a broad class of fundamental problems in coding theory and related areas. In particular, many code design problems, including the construction of convolutional codes, distributed storage codes, and codes for wireless communication, can be naturally formulated as variants of the matrix completion problem. 
In its most general formulation, the problem considers a partially specified matrix subject to structural constraints, such as 
column dependencies, and block-structure requirements, together with unconstrained (free) entries. 
These constraints can be captured algebraically by specifying a set of polynomial equations over the matrix entries and fixing selected entries to given values.
The goal of the matrix completion problem is to assign values to the free entries to satisfy specified algebraic objectives, such as 
ensuring the non-vanishing of specified full-size minors, while minimizing the required finite field size. The overarching challenge is to develop principled matrix completion frameworks that apply uniformly across a wide range of constraint patterns.


One class of matrix completion problems concerns the construction of a generator matrix of an MDS code that has a prescribed zero pattern (support constraint). The related GM-MDS\footnote{GM-MDS conjecture stands for Generator Matrix - Maximum Distance Separable conjecture.} conjecture, introduced
by Dau, Song, and Yuen 
in \cite{DSY14}, posits that the natural combinatorial feasibility criterion 
often called the
\emph{MDS condition} is not only necessary, but also sufficient to guarantee the existence of an MDS generator matrix over any field of size linear in the length of the code. The conjecture was subsequently resolved in full in \cite{Lovett21,YH19}.


Also the design of \emph{Maximum Distance Profile} (MDP) convolutional codes (commonly referred to as MDP codes) can be viewed as a variation of the matrix completion problem \cite{itw}. Convolutional codes are important for low-delay encoding and decoding of a stream of data and MDP codes are the class of convolutional codes that allows for best possible error-correction in this setting \cite{tomas2012decoding}. The sliding generator matrix of a convolutional code takes a specific form in which multiple copies of blocks are arranged in a repeating sliding pattern and some entries are fixed to zero. While this structural constraint 
does not allow every full-size minor to be non-zero, sliding generator matrices of MDP codes have the property that each
full size minor that can be nonzero, given the prescribed structural requirement, is nonzero,
\cite{MDS-Conv}. 
There exist several constructions for MDP codes, however mostly requiring large finite fields; see e.g., \cite{alfarano2024weighted,chen2023lower,cheng2026new,luo2023construction, almeida2013new}.


Another problem that can be viewed through the lens of Matrix Completion is the design of \emph{Locally Recoverable Codes} (LRCs) that are  \emph{Maximally Recoverable} (MR). The local recoverability property implies the existence of a collection of \emph{repair groups} such that each group enables local recovery of its symbols; that is,  symbol(s) in a repair group can be recovered using only the symbols within that group.
\mbox{MR-LRCs} provide recoverability from the largest possible set of erasure patterns (see \cite{BHH13,GHJY14,GG22,CMST22,umberto} and references therein) among linear block codes fulfilling certain locality constraints. In this paper, we focus on a class of \mbox{MR-LRCs} with unequal locality, i.e., codes in which the subcodes that correspond to different repair groups might have different sizes and dimensions, in addition to a set of global parity symbols which are not included in the repair groups.



We present a matrix-completion technique that is broad enough to encompass the construction of both code families. A unifying aspect of these constructions is a sparse generator matrix with elegant structural properties 
including that certain submatrices of the generator matrix are
constructed using a small subfield of a larger extension field. Such a structure facilitates efficient encoding and decoding procedures. 
In Section 2, we provide preliminaries on matrix completion, MR-LRCs and MDP codes. In Section 3, we present a construction for MR-LRCs with unequal localities for certain parameters. In Section 4, we present constructions for MDP codes with certain parameters. In Section 5, we explain commonalities of these constructions and mention some related matrix completion problems to be considered for future research.


\section{Preliminaries}

We start with giving the matrix completion problem in a general form before considering specific instances of it, i.e. the construction of certain MR-LRCs and MDP codes.

\begin{definition}\label{def:mrc}
    Let $M$ be a $K\times N$ matrix whose entries $m_{i,j}$ for $i=1,\hdots,K$, $j=1,\hdots,N$ are considered as variables. Define $\mathbb F_q[M]:=\mathbb F_q[m_{1,1},\ldots,m_{K,N}]$ and consider a
    set of polynomials $f_\ell(M)\in\mathbb F_q[M]$ for $\ell=1,\hdots,s$. 
    We say that a \textbf{solution} to the \textbf{matrix completion problem} corresponding to 
    $f_1,\hdots,f_s$ is a set of values for the $m_{i,j}\in\mathbb F_q$ that satisfies the following conditions: 
\begin{enumerate}
    \item $f_{\ell}(M) = 0$ for $\ell=1,\hdots,s$ (these equations are called \textbf{constraints} of the matrix completion problem);
    \item all \textbf{non-trivial full-size minors}, \emph{i.e.,} the $K \times K$ minors of $M$ that are not nilpotent in the ring $\mathbb{F}_q [M ] / \left\langle f_1(M),\hdots,f_s(M) \right\rangle$, are non-zero at the values chosen for $m_{i,j}$ ($f$ is nilpotent if there exists a positive integer $n$ such that $f^n = 0$).
\end{enumerate}

\end{definition}

\subsection{Maximally Recoverable Locally Recoverable Codes with Unequal Locality (MR-LRCs with UL)}

In this paper, we focus on Locally Recoverable Codes with Unequal Locality (LRCs with UL) \cite{7541336, 7541377}, defined as follows.  

\begin{definition}
     Let $\mathcal{C}$ be a non-degenerate linear block code of length $L$.
     We say that a codeword symbol $c_j$ for $j\in\{1,\hdots,L\}$ has \textbf{locality} $k_j$ if there exists   a \textbf{repair group} $R_j \subset \{1,\hdots,L\}$ such that $ j \in R_j$ and $\mathcal{C}|_{R_j}$ forms a $[|R_j|, k_j, \delta_j]$-code; if $\delta_j>1$ for all $j$, then we call it a \textbf{Locally Recoverable Code with Unequal Locality}.
     %
     \end{definition}
 
There exists a modified Singleton bound on the minimum distance of LRCs with UL \cite{7541336, 7541377}. 
We extend LRCs with UL to MR-LRCs with UL by requiring maximum recoverability, i.e. all erasure patterns that can be recovered under the given locality constraints should be recovered. This property is considerably stronger than reaching the modified Singleton bound and
captures the code's full erasure-correction potential. 

\begin{definition}\label{def:1}
For $\ell,h,n_i,k_i\in\mathbb N$ with $n_i\geq k_i$ for $i=0,\hdots,\ell-1$, let $\mathcal{C}\subseteq \mathbb{F}_{q}^{N}$ be an $\mathbb{F}_{q}$-linear code of length $N=n_0+\cdots+n_{\ell-1}+h$ and dimension $K=k_0+\cdots+k_{\ell-1}$. 
Set $\mathcal{G}_i:=\{1+\sum_{t=0}^{i-1}n_t,\hdots,\sum_{t=0}^i n_t\}$ for $i=0,\hdots,\ell-1$. Note that these sets form a partition of $\{1,\hdots,N-h\}$ and $|\mathcal{G}_i|=n_i$.
We say that $\mathcal{C}$ is \textbf{MR-LRC with UL} if 
   \begin{enumerate}
         \item $\dim(\mathcal{C}|_{\mathcal{G}_i})\leq k_i$, i.e.\ each codeword symbol indexed by $\mathcal{G}_i$ has locality $k_i$;
        \item 
        for each \textbf{admissible} puncturing $S$, i.e. for each puncturing where 
        exactly $n_i-k_i$ coordinates in $\mathcal{G}_i$ are deleted, 
        the punctured code
$\mathcal{C}|_S 
$ 
        %
        is an $[h+K ,K]$-MDS code.
    \end{enumerate}
   The sets $\mathcal{G}_i$ are called  \textbf{local repair groups} and the last $h$ coordinates of a codeword are called the \textbf{global parities}.
\end{definition}

\begin{remark}
Note that 
Definition~\ref{def:1} implies that $\mathcal{C}|_{\mathcal{G}_i}$ is an $[n_i,k_i]$-MDS code for $i=0,\ldots,\ell-1$.
Also note that in contrast to the MR-LRCs considered in the literature before (see e.g. \cite{GG22}), by considering MR-LRCs with UL we allow the local repair groups to have different sizes $n_i$ and localities $k_i$ (while the $h$ global parities have locality $K$) but require that $k_0+\cdots+k_{\ell-1}$ equals the dimension $K$ of the full code. MR-LRCs are also known as partial MDS codes \cite{BHH13}. In \cite{aa20}, partial MDS codes with different sizes $n_i$ are considered but still it is required that all localities $k_i$ are equal.
\end{remark}

\begin{remark}\label{compmr}
    The problem of constructing MR-LRCs with UL can be seen as a matrix completion problem where Condition 1 of Definition \ref{def:1} provides the constraints and Condition 2 describes the non-trivial full-size minors of a generator matrix $M$ of $\mathcal{C}$.
    More precisely, if
   $M_i$ is the $K \times n_i$ submatrix of $M$ given by the columns indexed by $\mathcal{G}_i$, Condition 1 of Definition~\ref{def:1} 
   is equivalent to the 
   constraints
      $f_{i,S,T}(M)=0$ for all $S \subset \{1,\hdots,K\}$, $T \subset \{1,\hdots,n_i\}$
       with $|S|=|T|=k_i+1$, for $i=1,\hdots,\ell-1$, where
       $f_{i,S,T}(M)$ is the 
             minor of $M_i$ corresponding to the rows and columns indexed by $S$ and $T$ respectively.
   \end{remark}

\subsection{Convolutional Codes}
\begin{definition}
    For $k\leq n$, an $(n,k)$ \textbf{convolutional code} $\mathcal{C}$ is defined as $\mathbb{F}_{q}[z]$-submodule of $\mathbb{F}_{q}[z]^n$ of rank $k$. A polynomial matrix $G(z)\in\mathbb{F}_{q}[z]^{k\times n}$ of full (row) rank such that
    $$
\mathcal{C} =  \{v(z) = u(z)G(z) \in \mathbb{F}_{q}[z]^n\ |\ u(z) \in \mathbb{F}_{q}[z]^k\} \nonumber
$$
is called \textbf{generator matrix} of $\mathcal{C}$.
The maximum (polynomial) degree of the full-size minors of a generator matrix is called the \textbf{degree} $\delta$ of $\mathcal{C}$. An $(n,k)$ convolutional code of degree $\delta$ is denoted as $(n, k, \delta)$ convolutional code.\\
If there exists a matrix $H(z)\in\mathbb F_q[z]^{(n-k)\times n}$ of full (row) rank such that $$\mathcal{C}=\{v(z)\in\mathbb F_q[z]^n\ |\ H(z)v(z)^\top=0\},$$ this matrix $H(z)$ is called \textbf{parity-check} matrix for $\mathcal{C}$.
\end{definition}

In contrast to linear block codes not all convolutional codes admit a representation via a parity-check but only those with a generator matrix $G(z)\in\mathbb F_q[z]^{k\times n}$ that is left-prime, i.e. $G(z)$ is full rank for all $z\in\overline{\mathbb F}_q$, where $\overline{\mathbb F}_q$ denotes the algebraic closure of $\mathbb{F}_q$; see \cite{York97}.
Such convolutional codes are called \textbf{non-catastrophic}.

The structure of convolutional codes makes them very suitable for low-delay applications and the crucial distance measure for such applications are the column distances.

\begin{definition}
For $j\in\mathbb N_0$, the \textbf{$j$-th column distance}\index{column distance} of a convolutional code $\mathcal{C}$ is 
$$
d_j^c(\mathcal{C}):=\min\left\{\sum_{t=0}^j wt(v_t)\ |\ {v}(z)=\sum_{i\in\mathbb N_0}v_iz^i\in\mathcal{C} \text{ with }{v}_0 \neq 0\right\}.
$$
\end{definition}

For the $j$-th column distance, one has the following upper bound.

\begin{theorem}\cite{MDS-Conv}\label{thm:column} Let $\mathcal{C}$ be an $(n,k,\delta)$ convolutional code. Then, for $j\in\mathbb N_0$,
$$d_j^c (\mathcal{C}) \leq (n-k)(j + 1) + 1.$$
If $d^c_j = (n-k)(j+1)+1$ for some $j \in \mathbb N_0$, then $d^c_i = (n-k)(i+1)+1$, for $i \leq j$. 
Moreover, $d_j^c (\mathcal{C}) = (n-k)(j + 1) + 1$ implies  $j\leq L
:=\left\lfloor\frac{\delta}{k}\right\rfloor+\left\lfloor\frac{\delta}{n-k}\right\rfloor.$
\end{theorem}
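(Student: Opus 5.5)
We will prove the three assertions in turn. All three rest on the sliding generator matrix and a counting argument on a suitably constrained codeword.

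Write a generator matrix as $G(z)=\sum_{i=0}^{\mu}G_iz^i$ with $G_i\in\mathbb F_q^{k\times n}$. Without loss of generality we take $G(z)$ row reduced, so that $\delta$ is the sum of the row degrees. For $j\in\mathbb N_0$ form the truncated sliding matrix
\[
G_j^c=\begin{pmatrix} G_0 & G_1 & \cdots & G_j\\ & G_0 & \cdots & G_{j-1}\\ & & \ddots & \vdots\\ & & & G_0\end{pmatrix}\in\mathbb F_q^{(j+1)k\times (j+1)n},
\]
with $G_i=0$ for $i>\mu$. Then $(v_0,\dots,v_j)=(u_0,\dots,u_j)G_j^c$. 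We may assume $G_0$ has full rank $k$ (the paper's setting of non-catastrophic codes guarantees this; in general one replaces $G$ by an equivalent matrix without a common factor $z$). Under this assumption, $v_0\neq0$ is equivalent to $u_0\neq0$.

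\textbf{Upper bound.} We build an explicit low-weight truncated codeword. Choose $u_0\neq 0$ such that $v_0=u_0G_0$ vanishes on $k-1$ prescribed coordinates; this is possible because it imposes $k-1$ homogeneous linear conditions on $k$ unknowns. Hence $wt(v_0)\le n-k+1$. Then, for $t=1,\dots,j$, choose $u_t$ successively so that
\[
v_t=u_tG_0+\sum_{i=1}^{t}u_{t-i}G_i
\]
vanishes on $k$ prescribed coordinates. Since $G_0$ has rank $k$, any $k$ columns forming an invertible submatrix of $G_0$ can be used, which gives $wt(v_t)\le n-k$. Summing yields $(n-k+1)+j(n-k)=(n-k)(j+1)+1$.

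\textbf{Monotonicity.} Suppose the bound is attained at $j$ but $d^c_i\le (n-k)(i+1)$ for some $i<j$. Take a witnessing codeword, with $u_0\neq0$, whose first $i+1$ blocks have total weight at most $(n-k)(i+1)$. Extend it as in the previous step, choosing $u_{i+1},\dots,u_j$ so that each of the blocks $v_{i+1},\dots,v_j$ has weight at most $n-k$. The total weight is then at most $(n-k)(j+1)$, contradicting the assumption that $d^c_j=(n-k)(j+1)+1$.

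\textbf{Bound on $j$.} This is the main obstacle. The plan is as follows.

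1. Let $\nu_1\le\dots\le\nu_k$ be the row degrees, with $\sum\nu_i=\delta$. The smallest is $\nu_1\le\lfloor\delta/k\rfloor$.

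2. Take $u(z)=e_1$, the first unit vector. Then $v(z)$ is the first row of $G$, which has polynomial degree $\nu_1$. It has $v_0\neq0$ and total weight at most $n(\nu_1+1)$.

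3. This codeword is still too heavy. Instead, take $u(z)$ a polynomial combination of low-degree rows chosen so that $v_0$ already has weight at most $n-k+1$ and $v(z)$ has small degree. Its total weight is then at most $(n-k+1)+n\cdot\deg v$.

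4. If $d^c_j$ is optimal for some $j\ge\deg v$, we need $(n-k)(j+1)+1\le wt(v)$, which forces $j$ to be small. Making this precise requires balancing degree against weight.

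The cleaner route is duality. For non-catastrophic codes, the dual code has degree $\delta$ and parameters $(n,n-k)$. The known equivalence is that $d^c_j$ is optimal if and only if certain full-size minors of $G_j^c$ are nonzero, and that this holds if and only if the analogous statement holds for the dual. Two counting facts then combine:

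1. Since $G_i=0$ for $i>\lfloor\delta/k\rfloor$ in a suitable basis (or rather, the last block row of $G_j^c$ has many zero blocks), a full-size minor condition fails once $j>\lfloor\delta/k\rfloor+\lfloor\delta/(n-k)\rfloor$.

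2. This count comes from applying the argument for the generator and for the parity-check matrix together: the first row's degree $\le\lfloor\delta/k\rfloor$, and the dual's smallest row degree $\le\lfloor\delta/(n-k)\rfloor$.

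The hardest part is this final counting step. We would first try the parity-check characterization of optimal column distances, since it makes the obstruction a visible zero pattern. Failing that, we would cite \cite{MDS-Conv} for this implication, as the statement itself does.
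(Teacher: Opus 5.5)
The paper does not prove this statement; it quotes it from \cite{MDS-Conv}, so I compare your proposal with the standard argument. Your first two parts are that argument and are fine, provided $G_0$ has full row rank.

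The genuine gap is the third assertion, $j\le L$. What you give there is a list of heuristics, not a proof, and some of its ingredients are false. In general there is no generator matrix with $G_i=0$ for $i>\lfloor\delta/k\rfloor$. The row degrees of a row-reduced generator matrix are invariants of the code, and only the \emph{smallest} one is bounded by $\lfloor\delta/k\rfloor$; for instance, take $k=2$ with row degrees $0$ and $\delta$. Passing to the dual only restates the same claim for an $(n,n-k,\delta)$ code, since $L$ is symmetric in $k$ and $n-k$. Your first idea, a single low-degree codeword with light $v_0$, fails because the estimate $(n-k+1)+n\deg v$ gives no control over $v_1,\dots,v_{\deg v}$.

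The missing idea is the generalized Singleton bound: apply the block Singleton bound to the \emph{whole} $\mathbb F_q$-space of codewords of degree at most $m:=\lfloor\delta/k\rfloor$. Take a row-reduced $G(z)$ with row degrees $\nu_1,\dots,\nu_k$. By the predictable degree property, this space has dimension $\sum_i\max(0,m-\nu_i+1)\ge k(m+1)-\delta\ge 1$ inside $\mathbb F_q^{n(m+1)}$. So it contains a nonzero $v$ with $wt(v)\le n(m+1)-(k(m+1)-\delta)+1=(n-k)(m+1)+\delta+1$. Since $G_0$ has full row rank, $v_0=0$ forces $u_0=0$. You may therefore divide by a power of $z$ to get a codeword $w$ with $w_0\neq 0$ and $wt(w)=wt(v)$. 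Because truncation cannot increase weight, $d_j^c\le wt(w)$ for every $j$. If $d_j^c=(n-k)(j+1)+1$, this gives $(n-k)j\le (n-k)m+\delta$, i.e.\ $j\le m+\lfloor\delta/(n-k)\rfloor=L$.

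A smaller point concerns your standing assumption that $G_0$ has full rank. It cannot be obtained by passing to an equivalent generator matrix, as you suggest. If $G'(z)=U(z)G(z)$ with $U(z)$ unimodular, then $G'_0=U(0)G_0$ with $U(0)$ invertible, so $\mathrm{rk}(G_0)$ is an invariant of the code. Full rank of $G_0$ must therefore be a hypothesis; it holds, for example, for non-catastrophic codes, which is the setting where the result is used. Without it the statement is false. The code generated by $(1,1,1)$ and $(z,0,0)$ has $(n,k,\delta)=(3,2,1)$. Yet every codeword with $v_0\neq 0$ has $v_0\in\mathrm{span}\{(1,1,1)\}$, so $d_0^c=3>2$.
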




\begin{definition}\label{def:MDP_code}
An $(n,k,\delta)$ convolutional code $\mathcal{C}$ is called \textbf{maximum distance profile (MDP)} if $d_L^c (\mathcal{C}) = (n-k)(L + 1) + 1$.
\end{definition}
Let $G(z)=\sum_{i=0}^{\mu}G_iz^i$ with $G_{\mu}\neq 0$, where $\mu:=\deg(G(z))$ is called \textbf{memory} of $G(z)$.
A convolutional code that possesses a generator matrix $G(z)$ with $\mu=1$ is called \textbf{unit-memory} if $rk(G_1)=k$ and \textbf{partial unit-memory} if $rk(G_1)<k$. 
  The \textbf{sliding generator matrix} 
 is defined as $$G_j^c:=\begin{bmatrix}
G_0 & \cdots & G_{j}\\
 & \ddots & \vdots\\
\mathbf{0} &  & G_0
\end{bmatrix}\in\mathbb F_q^{(j+1)k \times (j+1)n}\quad
\text{for}\quad j\in\mathbb N_0,
$$ 
with $G_i=0$ for $i>\mu$. Similarly, if $\mathcal{C}$ is non-catastrophic with parity-check matrix $H(z)=\sum_{i=0}^\nu H_iz^i$, then the 
\textbf{sliding parity-check matrix} 
 is defined as $$H_j^c:=\begin{bmatrix}
H_0 &  & \mathbf{0} \\
\vdots  & \ddots & \\
H_{j}& \hdots  & H_0
\end{bmatrix}\in\mathbb F_q^{(j+1)(n-k) \times (j+1)n}\quad
\text{for}\quad j\in\mathbb N_0.
$$ 
Next, we present a criterion to check whether a convolutional code is MDP.

\begin{theorem}\cite{MDS-Conv}\label{thm:MDP_criterion}
Let $\mathcal{C}$ be a convolutional code with generator matrix $G(z)=\sum_{i=0}^{\mu}G_iz^i\in \mathbb{F}_q[z]^{k\times n}$ and $0\leq j\leq L$.
The following statements are equivalent:
  \begin{itemize}
  \item[(a)] $d_j^c (\mathcal{C})=(n-k)(j + 1) + 1$.
  \item[(b)]
  All non-trivial full-size minors of  $G_j^c\in\mathbb F_q^{(j+1)k \times (j+1)n}$ are nonzero, i.e. all minors
  formed
by columns with indices $ 1 \leq t_1 < \cdots < t_{(j+1)k}$, where $t_{sk+1} > sn$, for $s=1,2, \ldots, j$.
\end{itemize}
In case $\mathcal{C}$ is non-catastrophic with left-prime parity-check matrix $H(z)=\sum_{i=0}^\nu H_iz^i$, then (a) and (b) are further equivalent to the following statement:
\begin{itemize}
 \item[(c)]
 All non-trivial full-size minors of $H_j^c\in\mathbb F_q^{(j+1)(n-k) \times (j+1)n}$ are nonzero, i.e. all minors formed by columns with indices
    $1\leq \ell_1<\cdots<\ell_{(j+1)(n-k)}\leq(j+1)n$ which fulfill
    $\ell_{s(n-k)}\leq sn$ for $s=1,\hdots,j$.
  \end{itemize}
\end{theorem}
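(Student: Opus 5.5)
We prove the equivalence of (a) and (b) by translating the column distance into a weight condition on codeword prefixes, and then treat (c) as the dual statement of (b).

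\textbf{Step 1: the truncated codewords.} Write the truncation of a codeword $v(z)=u(z)G(z)$ as $(v_0,\dots,v_j)=(u_0,\dots,u_j)G_j^c$. Then
\[
d_j^c(\mathcal C)=\min\Bigl\{\mathrm{wt}\bigl((u_0,\dots,u_j)G_j^c\bigr)\ :\ u_0G_0\neq 0\Bigr\}.
\]
For $j\le L$ we may assume $G_0$ has full row rank, which is needed for (a) anyway. So the constraint $v_0\neq 0$ becomes $u_0\neq 0$.

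\textbf{Step 2: (b) $\Rightarrow$ (a).} Suppose $u_0\neq 0$ and $v=uG_j^c$ has weight at most $(n-k)(j+1)$. Then $v$ vanishes on a set $T$ of at least $(j+1)k$ coordinates.

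\begin{itemize}
\item \emph{Counting per block.} By the upper bound of Theorem~\ref{thm:column} applied to the truncation, $v_0,\dots,v_s$ has weight at least $(n-k)(s+1)+1$ unless something degenerates. More simply, among the zero coordinates we count how many lie in the first $s$ blocks.
\item \emph{Case of an admissible zero set.} If some $(j+1)k$ of the zero positions satisfy $t_{sk+1}>sn$, then the corresponding column submatrix is nonsingular by (b). Since $u$ annihilates it, $u=0$, a contradiction.
\item \emph{Case of too many zeros early.} If no such admissible choice exists, some initial segment of $s$ blocks contains more than $sk$ zeros. We then pass to the truncation at $s-1$ and argue by induction on $j$.
\end{itemize}

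The main obstacle is making this combinatorial selection precise. Greedily choose zero positions from the right, at most $k$ from each new block counted from the end, and show that either this produces an admissible index set or some prefix is too light. In the second case, apply the induction hypothesis to $(u_0,\dots,u_{s-1})$: the triangular structure shows $(v_0,\dots,v_{s-1})=(u_0,\dots,u_{s-1})G_{s-1}^c$, and that prefix already has weight exceeding $(n-k)s$, so the zeros are forced into the later blocks.

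\textbf{Step 3: (a) $\Rightarrow$ (b).} Argue by contrapositive. If an admissible minor vanishes, there is a nonzero $u$ killing those $(j+1)k$ columns, so $\mathrm{wt}(uG_j^c)\le (n-k)(j+1)$. We must ensure $u_0\neq 0$. If $u_0=0$, shift: the block-Toeplitz structure makes $(u_1,\dots,u_j)$ act through $G_{j-1}^c$ on the last $j$ blocks. The admissibility condition $t_{sk+1}>sn$ guarantees that at most $(j+1-s)k$ selected columns lie in the last $j+1-s$ blocks. Hence we get a zero-pattern contradiction with full row rank of $G_0$, or a lower-order violation, which by the monotonicity part of Theorem~\ref{thm:column} contradicts (a).

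\textbf{Step 4: (b) $\Leftrightarrow$ (c).} Assume $H(z)$ is left-prime. Then $G_j^c(H_j^c)^\top=0$, and the row space of $G_j^c$ is the full kernel of $H_j^c$; this uses non-catastrophicity, via the fact that $\mathcal C$ is exactly the kernel of $H$. By the standard duality of Plücker coordinates, a $(j+1)k$ minor of $G_j^c$ on columns $T$ equals, up to a nonzero scalar, the $(j+1)(n-k)$ minor of $H_j^c$ on the complement $T^c$. The condition $t_{sk+1}>sn$, meaning at most $sk$ indices of $T$ lie in $\{1,\dots,sn\}$, is equivalent to at least $s(n-k)$ indices of $T^c$ lying there, that is, $\ell_{s(n-k)}\le sn$. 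This gives the equivalence directly.
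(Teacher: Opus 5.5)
The paper gives no proof of this theorem; it quotes it from \cite{MDS-Conv}, so there is no in-paper argument to compare against. Judged on its own, your outline is correct. The contrapositive with a shift to the first nonzero $u_r$ handles (a)$\Rightarrow$(b), induction on $j$ handles (b)$\Rightarrow$(a), and complementary-minor (Jacobi/Pl\"ucker) duality handles (b)$\Leftrightarrow$(c). The duality argument is a tidy way to get (c) without redoing the column-distance analysis for $H_j^c$. Your translation of $t_{sk+1}>sn$ into $\ell_{s(n-k)}\le sn$ on the complement is right.

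Several points need repair.

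\emph{The rank of $G_0$.} Full row rank of $G_0$ does not follow from (a). For $G(z)=\begin{bmatrix}1&1&0\\0&0&z\end{bmatrix}$ one has $d_0^c=2=n-k+1$, yet every $2\times 2$ minor of $G_0$ vanishes. Full rank is a standing hypothesis, automatic for non-catastrophic codes, and without it (a)$\Rightarrow$(b) is false. State it as a hypothesis, not as a without-loss-of-generality reduction.

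\emph{Step 2.} You apply the induction hypothesis at level $s-1$, which needs (b) for $G_{s-1}^c$; this takes a line. If $T'$ is admissible for $G_{s-1}^c$, append to it the same $k$ columns $S$ of each block $s,\dots,j$, where $\det G_0|_S\neq 0$. The resulting $T$ is admissible, and block triangularity gives $\det G_j^c|_T=\det G_{s-1}^c|_{T'}\cdot(\det G_0|_S)^{j+1-s}$. Also drop the first bullet: the upper bound of Theorem~\ref{thm:column} gives no lower bound on prefix weights.

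\emph{Step 3.} The inequality is reversed. Admissibility says at most $sk$ selected columns lie in the first $s$ blocks, hence \emph{at least} $(j+1-s)k$ lie in the last $j+1-s$ blocks. It is this ``at least'', applied with $s=r$, that makes $(v_r,\dots,v_j)=(u_r,\dots,u_j)G_{j-r}^c$ have weight at most $(n-k)(j+1-r)$ with $v_r\neq 0$.

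\emph{Step 4.} The cleanest proof that the row space of $G_j^c$ is the whole kernel of $H_j^c$ is a dimension count. Both $G_0$ and $H_0=H(0)$ have full row rank, so $G_j^c$ and $H_j^c$ have ranks $(j+1)k$ and $(j+1)(n-k)$, which sum to $(j+1)n$. Full row rank of both matrices is also exactly what the complementary-minor identity requires.
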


Note that condition $(b)$ of the previous theorem implies that $G_0$ has to be an MDS matrix (this is due to the case where we choose $k$ out of the first $n$ columns of $G_j^c$). Moreover, the previous theorem implies the following statement about the dual of an MDP code.

\begin{corollary}\label{dual}\cite{MDS-Conv}
Let $G(z)$ be a left-prime generator matrix of an $(n,k,\delta)$ MDP code. Then $G(z)$ is the parity-check matrix of an $(n,n-k,\delta)$ MDP code. In particular, the dual of an $(n,k,\delta)$ MDP code is an $(n,n-k,\delta)$ MDP code.
\end{corollary}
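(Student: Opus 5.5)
The plan is to reduce the statement to Theorem~\ref{thm:MDP_criterion}, applied to the code $\mathcal{C}^\perp$ generated by $H(z):=G(z)$, with parity-check matrix $G(z)$.

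\emph{Step 1: $\mathcal{C}^\perp$ is well defined and non-catastrophic.} Let $\mathcal{C}$ be the $(n,k,\delta)$ MDP code with left-prime generator matrix $G(z)$. Since $G(z)$ is left-prime, $\mathcal{C}$ is non-catastrophic and admits a left-prime parity-check matrix $H(z)\in\mathbb F_q[z]^{(n-k)\times n}$. Define $\mathcal{C}^\perp$ as the code generated by $H(z)$. It is an $(n,n-k)$ code. Because $G(z)$ and $H(z)$ are both left-prime and $G(z)H(z)^\top=0$, the module $\{v:\ G(z)v^\top=0\}$ is exactly the row module of $H(z)$. Hence $G(z)$ is a (left-prime) parity-check matrix of $\mathcal{C}^\perp$, and $\mathcal{C}^\perp$ is non-catastrophic.

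\emph{Step 2: the degrees agree.} For complementary left-prime (basic) matrices, the full-size minors of $G(z)$ and of $H(z)$ coincide up to a unit and a complementary index permutation. This is the standard Plücker-coordinate fact for a polynomial kernel pair. Therefore $\deg\mathcal{C}^\perp=\delta$. Since $\lfloor \delta/k\rfloor+\lfloor\delta/(n-k)\rfloor$ is symmetric in $k$ and $n-k$, both codes have the same $L$.

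\emph{Step 3: transfer of the MDP property.} Because $\mathcal{C}$ is MDP, $d^c_L(\mathcal{C})=(n-k)(L+1)+1$. By the implication (a)$\Rightarrow$(b) of Theorem~\ref{thm:MDP_criterion} applied to $\mathcal{C}$, all non-trivial full-size minors of $G_L^c$ are nonzero. Now apply Theorem~\ref{thm:MDP_criterion} to $\mathcal{C}^\perp$, whose dimension is $n-k$, so that its "$n-k$" equals $k$. Its left-prime parity-check matrix is $G(z)$, so condition (c) for $\mathcal{C}^\perp$ asks that all minors of $G_L^c$ with columns $\ell_1<\dots<\ell_{(L+1)k}$ satisfying $\ell_{sk}\le sn$ be nonzero.

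Here the two matrices must be matched. For $\mathcal{C}^\perp$, the sliding parity-check matrix in (c) is built from $G(z)$ in lower-triangular form, whereas $G_L^c$ in (b) is upper-triangular. I would show that reversing both the block-row order and the block-column order turns one into the other. Under this reversal the column condition $t_{sk+1}>sn$ becomes $\ell_{sk}\le sn$ for the reversed column indices. Both conditions express "at most $sk$ columns are chosen among the relevant $sn$ columns", read from opposite ends. Reversing rows and columns only changes minors by a sign, so (c) holds for $\mathcal{C}^\perp$. Then (c)$\Rightarrow$(a) gives $d^c_L(\mathcal{C}^\perp)=k(L+1)+1$, so $\mathcal{C}^\perp$ is MDP.

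The main obstacle is this index bookkeeping under block reversal. The remaining work is standard: the degree equality in Step 2 and the fact that the dual is generated by $H(z)$.
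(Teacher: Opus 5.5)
Your proposal is correct and follows the same route as the paper, which only notes that the corollary follows from Theorem~\ref{thm:MDP_criterion}: condition (b) for $\mathcal{C}$ with generator $G(z)$ is identified with condition (c) for the dual code, which has $G(z)$ as its left-prime parity-check matrix. You also supply the details the paper leaves implicit: equality of degrees (and hence of $L$) via complementary full-size minors, and the block-row/block-column reversal that turns the upper-triangular $G_L^c$ into the lower-triangular sliding parity-check matrix and converts $t_{sk+1}>sn$ into $\ell_{sk}\le sn$.
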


In this paper, we only consider partial unit-memory convolutional codes with $k>n-k=\delta$, i.e. $L=1$; note that this class of codes is also considered in \cite{luo2023construction,chen2023lower}.

\begin{remark}\label{mdp_matrixcompletion}
The construction of MDP codes can be stated as a matrix completion problem for $M=G^c_L$. For $L=1$, the constraints are given by $m_{i,j}=0$ for $(i,j)\in\{k+1,\hdots,n\}\times\{1,\hdots,n\}$ and $m_{i,j}-m_{i+k,j+n}=0$ for $(i,j)\in\{1,\hdots,k\}\times\{1,\hdots,n\}$.
\end{remark}

\section{Construction of MR-LRCs with UL}


In the following, we present a construction for MR-LRCs with UL with very sparse generator matrices.
To this end, we consider a generator matrix of the form
    \begin{equation}\label{G}
G:=\begin{bmatrix}
    G_0 & 0 & 0 & \dots & 0  & P_0 \\ 
     0 & G_1 & 0 & \dots & 0  & P_1 \\ 
    \vdots &\vdots  & \vdots  & \ddots &  \vdots  & \vdots \\ 
     0 & 0 & 0 &  \dots & G_{\ell-1}  & P_{\ell-1} \\ 
\end{bmatrix}
\end{equation}
with $G_i\in\mathbb F^{k_i \times n_i}$ and $P_i\in\mathbb F^{k_i\times h}$ for $i=0,\hdots,\ell-1$. 
Moreover, if a message $ m \in \mathbb{F}^K$ is partitioned as $m = (m_0,...,m_{\ell -1})$ with \textbf{information blocks} $m_i\in\mathbb F^{k_i}$
for $i=0,\hdots,\ell-1$, then
MR-LRCs with UL of this form have the property that
 $m_i$ can be recovered from $m_i G_i$ if not more than $n_i-k_i$ erasures happened in the coordinates with indices in $\mathcal{G}_i$. This property generalizes the notion of information symbol locality (see e.g. \cite{pkglk12}) to what we call
\textbf{information block locality}.
In this way we further specify the matrix completion problem given in Remark \ref{compmr} via considering generator matrices of the form \eqref{G} where we additionally fix $G_i$ to be Cauchy matrices. In this setup, solving the corresponding matrix completion problem asks, given the structure of $G$ and Cauchy matrices $G_i$, to find $P_i$ such that all non-trivial full-size minors of $G$ are nonzero.

 \begin{theorem}\label{thm_MRC}
Let $h\leq\min\{k_{i},\ 0\leq i\leq \ell-1 \}$ and let $G_i\in\mathbb F_q^{k_i\times n_i}$, for $i=0,\hdots,\ell-1$, be Cauchy matrices and  \begin{equation}
   P_i:= \begin{bmatrix}   
 \mathrm{Diag}(\alpha^{0}, \alpha^{i}, \alpha^{2i},...,\alpha^{(h-1)i}) \\
 0_{(k_i-h)\times h}
   \end{bmatrix}
\end{equation}  
for some primitive element $\alpha\in\mathbb F_{q^d}$. If $h=1$ and $d\geq 1$, or if $h>1$ and
$$d\geq(\ell-1)\cdot\left\lfloor \frac{h}{2}\right\rfloor\cdot \left\lceil \frac{h}{2}\right\rceil+1=:D,$$ then the corresponding code is MR-LRC with UL over $\mathbb F_{q^d}$.
 \end{theorem}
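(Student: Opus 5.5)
Condition 1 of Definition~\ref{def:1} is immediate from the block form \eqref{G}: the columns indexed by $\mathcal{G}_i$ are zero outside the $k_i$ rows of block $i$. Hence $\dim(\mathcal{C}|_{\mathcal{G}_i})\le k_i$.

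For Condition 2, fix an admissible puncturing $S$. In each group we keep exactly $k_i$ columns of $G_i$, giving a $k_i\times k_i$ submatrix $A_i$. Since $G_i$ is Cauchy, every square submatrix of it is invertible, so $A_i$ is invertible. We must show that every $K\times K$ submatrix of the $K\times(K+h)$ matrix $[\mathrm{Diag}(A_0,\dots,A_{\ell-1})\mid P]$ is nonsingular, where $P=[P_0;\dots;P_{\ell-1}]$. Left-multiplying by $\mathrm{Diag}(A_i^{-1})$ reduces this to showing that $[I_K\mid Q]$ generates an MDS code, where $Q=[A_0^{-1}P_0;\dots;A_{\ell-1}^{-1}P_{\ell-1}]$.

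By the standard systematic-MDS criterion, this holds iff every square submatrix of $Q$ is nonsingular. Write $B_i:=A_i^{-1}$ restricted to its first $h$ columns, which is $k_i\times h$. Then $A_i^{-1}P_i=B_i\,\mathrm{Diag}(1,\alpha^i,\dots,\alpha^{(h-1)i})$. The inverse of a Cauchy matrix is explicit, and every minor of $A_i^{-1}$ is, by Jacobi's complementary-minor identity, $\pm$ a minor of $A_i$ divided by $\det A_i$. Thus every minor of $B_i$ is nonzero, and all entries of the $B_i$ lie in $\mathbb F_q$.

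Take a $t\times t$ submatrix of $Q$, with $t\le h$, using row sets $R_i$ from block $i$ and a column set $C$. Expand its determinant by the generalized Laplace expansion along the row blocks. This expresses it as a sum over ordered partitions $C=C_0\sqcup\dots\sqcup C_{\ell-1}$ with $|C_i|=|R_i|$ of
\[
\pm\prod_i \det B_i[R_i,C_i]\;\alpha^{\sum_i i\,\sigma(C_i)},
\]
where $\sigma(C_i)$ is the sum of the column indices in $C_i$ (shifted to start from $0$). This is a polynomial $f(\alpha)$ over $\mathbb F_q$ whose coefficients come from products of nonzero minors. For $h=1$ the determinant is a single nonzero entry, so it suffices to have $d\ge1$.

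For $h>1$, I want to show that $f$ is a nonzero polynomial of degree less than $d$. Then $f(\alpha)\neq0$, because the minimal polynomial of the primitive element $\alpha$ over $\mathbb F_q$ has degree $d$.

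\emph{Nonvanishing.} The exponent $\sum_i i\,\sigma(C_i)$ is maximized uniquely by the greedy assignment: give the largest available column indices to the highest block index $i$ that has nonempty $R_i$, and so on downward. By a rearrangement-inequality argument, any other partition gives a strictly smaller exponent, since swapping two columns across two blocks changes the exponent by $(i-i')(c-c')\neq0$. So the top coefficient is a single product of nonzero minors, and $f\neq0$.

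\emph{Degree bound.} We can subtract the minimal exponent as a common power of $\alpha$. The degree of the resulting polynomial is then the difference between the largest and smallest values of $\sum_i i\,\sigma(C_i)$. Shifting the block indices by a constant does not change this difference, so we may assume the blocks involved range over $0,\dots,\ell-1$. I expect the main obstacle to be bounding this spread by $(\ell-1)\lfloor h/2\rfloor\lceil h/2\rceil$. The plan is to use the rearrangement inequality: the spread is at most $(\ell-1)$ times the maximal difference in column-index sums between a "top" part and a "bottom" part. Splitting the $t\le h$ columns into two parts of sizes $a$ and $t-a$, that difference is at most $a(t-a)\le\lfloor h/2\rfloor\lceil h/2\rceil$. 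I will likely need an exchange argument to reduce the general case to two effective blocks (the extreme blocks $0$ and $\ell-1$). If the bound came out slightly off, I would refine the counting there. This yields $\deg<D\le d$, which completes the proof.
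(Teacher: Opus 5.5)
Your reduction to the systematic matrix $[I_K\mid Q]$, the use of Jacobi's identity to see that each $B_i$ is superregular over $\mathbb F_q$, the generalized Laplace expansion, and the uniqueness of the maximizing assignment are all correct. In two respects this route is cleaner than the paper's. First, it treats erasures among the global parities on the same footing as the others; the paper computes only the case without parity erasures and calls the rest ``easy to see''. Second, it rules out cancellation in the leading term, which the paper leaves implicit.

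\textbf{The gap.} The missing step is the degree bound, which is the entire content of the value of $D$. You only announce a plan, and the one concrete estimate in it is false in your setting. You must handle every square submatrix of $Q$, so the column set $C$ is an arbitrary $t$-subset of $\{0,\dots,h-1\}$, not an interval: columns are skipped exactly when global parities are erased. Suppose only the blocks $0$ and $\ell-1$ are involved, with $|R_0|=a\le|R_{\ell-1}|$. Then the spread is $(\ell-1)$ times the sum of the $a$ largest elements of $C$ minus the sum of its $a$ smallest. This can be as large as $(\ell-1)a(h-a)$, not $(\ell-1)a(t-a)$.

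For instance, take $h=3$ and erase the middle global parity together with one coordinate in each of $\mathcal G_0$ and $\mathcal G_{\ell-1}$. This gives $C=\{0,2\}$ and the $2\times 2$ minor $b\,\alpha^{2(\ell-1)}-b'$ of $Q$, with $b,b'\in\mathbb F_q^\ast$. Its spread is $2(\ell-1)=D-1$, while your estimate predicts $\ell-1$. So there is no slack, and the gaps in $C$ must be controlled explicitly. The same example shows that the paper's ``smaller than $D-1$'' for parity erasures should read ``at most $D-1$''.

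\textbf{One way to close it.} Write $C=\{c_1<\dots<c_t\}$. Let $\beta^{+}(p)$ and $\beta^{-}(p)$ be the block index assigned to $c_p$ in the maximizing and in the minimizing assignment, respectively, and set $\Delta(p)=\beta^{+}(p)-\beta^{-}(p)$, so that $\sum_p\Delta(p)=0$. Abel summation shows that the spread $\sum_p c_p\Delta(p)$ equals $\sum_{p=1}^{t-1}(c_{p+1}-c_p)T_p$, where $T_p=\sum_{q>p}\Delta(q)$. Here $T_p$ is the sum of the $t-p$ largest minus the sum of the $t-p$ smallest entries of the multiset in which $i$ occurs $|R_i|$ times. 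Hence $0\le T_p\le(\ell-1)\min(p,t-p)$. The gaps $c_{p+1}-c_p$ are at least $1$ and sum to at most $h-1$, and $m\mapsto m(h-m)$ is increasing for $m\le h/2$. Therefore
\[
\text{spread}\le(\ell-1)\bigl(\lfloor t/2\rfloor\lceil t/2\rceil+(h-t)\lfloor t/2\rfloor\bigr)=(\ell-1)\lfloor t/2\rfloor\bigl(h-\lfloor t/2\rfloor\bigr)\le(\ell-1)\lfloor h/2\rfloor\lceil h/2\rceil=D-1.
\]

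\textbf{Comparison with the paper.} When $C=\{0,\dots,h-1\}$ you are in the paper's situation. There the spread equals $Q_h(x)=\sum_{i<j}(j-i)x_ix_j$ with $x_i=|R_i|$, and the paper bounds $Q_h$ by shifting weight to $x_0$ and $x_{\ell-1}$. That quadratic form, and hence that exchange argument, does not describe your non-interval column sets. This is why the extra gap analysis above is needed.
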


\begin{proof}
Recall that we need to show that if we puncture the $i$-th local repair group $\mathcal{G}_i$ in $n_i-k_i$ positions, then the punctured code is MDS. Therefore, we can assume that the $G_i$ are square Cauchy matrices and need to show that the corresponding code is MDS.
All fullsize minors of $G$ are polynomials over $\mathbb F_q$ in $\alpha$ and we will show that all non-trivial full-size minors are  nonzero polynomials whose difference between the degrees of the maximal and minimal nonzero term is at most $D-1$.

Consider first the case that there are no erasures in the last $h$ columns of $G$, i.e. no erasures in the global parity symbols.

For $i=0,\hdots,\ell-1$, denote by $x_i$ the number of erasures in the columns corresponding to $G_i$ (after the puncturing that causes that the $G_i$ are square matrices), 
i.e.
$x_0+\dots+x_{\ell-1} = h.$
When calculating the corresponding minor, each nonzero term in the determinant corresponds to choosing 
 $x_i$ entries in $P_i$, for all $i=0,\hdots,\ell-1$, where all the choices (over all $P_i$) are from distinct columns (out of the last $h$ columns) of $G$.
 Let $w_{i,j}$ be the $j^\mathrm{th}$ choice for $P_i$ and let $T_{\vec{x}}$ be set of all such choices.   
Then, the minor has the form 
\begin{equation}
     \sum_{w \in T_{\vec{x}}} M_{w} \prod_{i=0}^{\ell - 1}  \alpha^{iw_{i,0}} \cdot \alpha^{iw_{i,1}} \cdot ... \cdot \alpha^{iw_{i,x_i-1}}=
     \sum_{w \in T_{\vec{x}}} M_{w}   \alpha^{ \sum_{i=0}^{\ell - 1} i \sum_{j \in [x_i]} w_{i,j}}
\end{equation}
where $M_w$ is a product of minors of Cauchy matrices and hence, nonzero.
We obtain the minimal exponent in $\alpha$ when choosing 
$    w_{\ell-1,j} =  j$, $
    w_{\ell - 2,j} = x_{\ell-1} + j, \cdots, 
    w_{0,j} = x_1 + x_2 +\cdots +x_{\ell-1} + j$.
On the other hand, we obtain the maximal exponent in $\alpha$ for
$    w_{0,j} = j$, $
    w_{1,j} = x_0 +j$, $\cdots$,
      $w_{\ell - 1,j} = x_0 + x_1 +\cdots +x_{\ell-2} + j$.
Therefore, the difference between the degrees of the maximal and minimal nonzero term is equal to
\begin{align*}
   & \sum_{i=0}^{\ell - 1} i \sum_{j \in [x_i]}w_{i,j}^\mathrm{max} - w_{i,j}^\mathrm{min}\\
   &
   =
    \sum_{i=0}^{\ell - 1} i \sum_{j \in [x_i]} x_0+\cdots +x_{i-1}+j -x_{i+1}-\cdots-x_{\ell-1 }-j\\
     &=     \sum_{i=0}^{\ell - 1}i x_i (x_0+\cdots+x_{i-1} -x_{i+1}-\cdots-x_{\ell-1 })\\
     &
   = \sum_{0 \leq i<j\leq \ell-1} (j-i) x_ix_j=: Q_h(x_0,\ldots,x_{\ell-1})
    \end{align*}
Hence, we have to maximize the quadratic form 
$
  Q_h(x_0,\ldots,x_{\ell-1})
$
subject to the constraints $x_0+\cdots+x_{\ell-1}=h$ and $x_0,\hdots,x_{\ell-1}\in\mathbb N_0$.
Note that $Q_1(x_0,\hdots,x_{\ell-1})=0$, which shows the claim for the case that $h=1$.
In the following, we assume $h>1$ and we will show that for any $(x_0,\hdots,x_{\ell-1})\in\mathbb N_0^\ell$ with $x_0+\cdots+x_{\ell-1}=h$, there exists $(\hat{x}_0,\hdots,\hat{x}_{\ell-1})\in\mathbb N_0^\ell$ with $\hat{x}_0+\cdots+\hat{x}_{\ell-1}=h$ and $\hat{x}_1=\cdots=\hat{x}_{\ell-2}=0$ such that 
$Q_h(x_0,\ldots,x_{\ell-1})\leq Q_h(\hat{x}_0,\hdots,\hat{x}_{\ell-1})$.

If $x_1=\cdots=x_{\ell-2}=0$, we are done. Otherwise, choose $t\in\{1,\dots,\ell-2\}$ such that $x_{t}\neq 0$. We want to shift the weight of $x_t$ to the edges of the sequence without decreasing the value of $Q_h$. If 
\begin{equation}\label{criterion}
\sum_{r=0}^{t-1}x_r\geq \sum_{r=t+1}^{\ell-1}x_r,
\end{equation}
we shift the weight to the right hand side, i.e. we
do the transformations $x_t\mapsto x_t-1$ and $x_{t+1}\mapsto x_{t+1}+1$. If \eqref{criterion} is not true, then we shift the weight to the left, i.e. we do the transformations $x_t\mapsto x_t-1$ and $x_{t-1}\mapsto x_{t-1}+1$.
If \eqref{criterion} is true, then 
\begin{align*}
&Q_{h}(x_0,\hdots,x_t-1,x_{t+1}+1,\hdots,x_{\ell-1})-Q_h(x_0,\hdots,x_{\ell-1})\nonumber\\
&= 
\sum_{r=0}^{t-1} (t+1-r)\,x_r + \sum_{r=t+2}^{\ell-1} (r-t-1)\,x_r-\sum_{r=0}^{t-1} (t-r)\,x_r - \sum_{r=t+2}^{\ell-1} (r-t)\,x_r+\nonumber\\
&+(x_t-1)(x_{t+1}+1)-x_tx_{t+1}=\sum_{r=0}^tx_r-\sum_{r=t+1}^{\ell-1}x_r-1\geq x_t-1\geq 0.    
\end{align*}
Analogous, if \eqref{criterion} is not true, then
\begin{align*}
&Q_{h}(x_0,\hdots,x_{t-1}+1,x_t-1,\hdots,x_{\ell-1})-Q_h(x_0,\hdots,x_{\ell-1})
\nonumber\\
&= 
\sum_{j=0}^{t-2} (t-1-j)\,x_j \;+\; \sum_{j=t+1}^{\ell-1} (j-t+1)\,x_j-\sum_{j=0}^{t-2} (t-j)\,x_j \;-\; \sum_{j=t+1}^{\ell-1} (j-t)\,x_j\nonumber\\
&+(x_t-1)(x_{t-1}+1)-x_tx_{t-1}\nonumber\\
&=-\sum_{j=0}^{t-1}x_t+\sum_{j=t}^{\ell-1}x_t-1> x_t-1\geq 0.    
\end{align*}
Note that condition \eqref{criterion} ensures that if some weight is moved to the right in some step, it can never be moved back towards the left in a later step, and the other way round.
In summary, this shows that we can shift all the weight to the edges $x_0$ and $x_{\ell-1}$ in some way without decreasing $Q_h$.  
Hence, 
$Q_h$ attains this maximum for $x_0=\left\lfloor \frac{h}{2}\right\rfloor$, $x_1=\cdots=x_{\ell-2}=0$, $x_{\ell-1}=
\left\lceil \frac{h}{2}\right\rceil$ and this maximum is equal to
$D=(\ell-1)\cdot \left\lfloor \frac{h}{2}\right\rfloor\cdot \left\lceil \frac{h}{2}\right\rceil$ 
. 
Note that if $h$ is odd, this maximum is attained for several choices of $x_0,\hdots,x_{\ell-1}$ but we are only interested in the maximal value.

It is easy to see that the difference between maximal and minimal degree in $\alpha$ of the nonzero terms of any non-trivial full-size minor of $G$ is smaller than $D-1$ if we have erasures in the last $h$ columns of $G$. 
\end{proof}



Our construction has the advantage of a very sparse generator matrix leading to low encoding complexity and good update efficiency as explained in the following.
To analyze the encoding complexity for the codes of the previous theorem, we follow the standard convention to only count the number of multiplications, but in our case, the number of additions is comparable. 
To calculate $mG$ for $m\in\mathbb F_{q^d}^K$ and $G$ as in \eqref{G}, for each $i=0,\hdots,\ell-1$, we need to multiply a vector from $\mathbb F_{q^d}^{k_i}$ with an $k_i\times n_i$ Cauchy matrix with entries in $\mathbb F_q$, which requires $O\left(\sum_{i=0}^{\ell-1}n_i\log^2 n_i\right)$ multiplications in $\mathbb F_{q^d}$ (see \cite{doi:10.1137/0908017}), and additionally we need to multiply $k_0+\cdots+k_{\ell-1}$ elements of the form $\alpha^t$ for some $t\in\mathbb N_0$ with an element of $\mathbb F_{q^d}$. Hence, we get the following result: 

\begin{theorem}
 The encoding complexity for our construction given in Theorem $\ref{thm_MRC}$ is $O\left(\sum_{i=0}^{\ell-1}n_i\log^2 n_i\right)$.
\end{theorem}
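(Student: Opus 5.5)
The approach is to count field multiplications in $\mathbb F_{q^d}$ needed to compute $c=mG$, block by block, using the structure \eqref{G}.

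First, split the codeword as $c=(m_0G_0,\ldots,m_{\ell-1}G_{\ell-1},\ \sum_{i=0}^{\ell-1}m_iP_i)$. This holds because the only nonzero block of $G$ in the rows of $m_i$ are $G_i$ and $P_i$. The two parts are then handled separately.

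\textbf{Local parts.} For the $i$-th local part, $m_iG_i$ is a vector--Cauchy-matrix product with $G_i\in\mathbb F_q^{k_i\times n_i}$. Following \cite{doi:10.1137/0908017}, it can be evaluated with $O(n_i\log^2 n_i)$ multiplications. The reason is that it reduces to evaluating the rational function $\sum_j m_{i,j}/(x_j-y_t)$ at $n_i$ points, and $k_i\le n_i$. Summing over $i$ gives $O\big(\sum_i n_i\log^2 n_i\big)$.

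\textbf{Global parities.} By the definition of $P_i$, only the first $h$ coordinates of $m_i$ contribute to $m_iP_i$. Concretely, $m_iP_i=(m_{i,1}\alpha^{0},m_{i,2}\alpha^{i},\ldots,m_{i,h}\alpha^{(h-1)i},\ldots)$ restricted to the $h$ parity positions. So forming all $m_iP_i$ costs at most $\sum_i h\le\sum_i k_i$ multiplications by fixed powers $\alpha^t$, which can be precomputed. Adding the $\ell$ resulting vectors requires only additions. Since $k_i\le n_i$, this contributes $O(\sum_i n_i)$.

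The total is $O\big(\sum_i n_i\log^2 n_i+\sum_i n_i\big)=O\big(\sum_i n_i\log^2 n_i\big)$. The only delicate point is the fast Cauchy multiplication for a non-square $k_i\times n_i$ matrix. To handle it, pad with zeros or split into $\lceil n_i/k_i\rceil$ square blocks. Alternatively, invoke the rational-function multipoint evaluation directly, which costs $O(n_i\log^2 n_i)$ either way.
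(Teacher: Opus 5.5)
Your proposal is correct and follows essentially the same argument as the paper. Each local block $m_iG_i$ is handled by fast vector--Cauchy multiplication via \cite{doi:10.1137/0908017}, and the global parities need only a linear number of multiplications by fixed powers of $\alpha$. Your count $\ell h\le\sum_i k_i$ is even slightly sharper than the paper's $k_0+\cdots+k_{\ell-1}$, and it is absorbed into $O\big(\sum_i n_i\log^2 n_i\big)$.

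For the rectangular $k_i\times n_i$ case, rely on splitting into column blocks or on direct multipoint evaluation of the rational function. Padding to a square Cauchy matrix would need $n_i-k_i$ extra poles distinct from all existing nodes, and $\mathbb F_q$ need not contain them.
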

Moreover, if some message symbol needs to be updated, i.e., its value to be changed, then not all of the codeword symbols need to be recomputed; one only has to update the corresponding local group and exactly one global parity symbol, since the diagonal structure of the matrices $P_0,...,P_{\ell-1}$ implies that each message symbol is only involved in the calculation of one global parity symbol.

\section{Construction of partial unit-memory MDP convolutional codes}

In this section, we present constructions for partial unit-memory MDP codes.

\begin{lemma}\label{noncat}
Let $G(z)=G_0+G_1z\in\mathbb F_q[z]^{k\times n}$, where $G_0$ is an MDS matrix and $G_1$ is of the form $G_1 = [ X \ \mathbf{0}_{k \times k}]$. Then, 
the corresponding convolutional code is non-catastrophic.
\end{lemma}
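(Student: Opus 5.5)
We prove that $G(z)$ is left-prime, i.e.\ that $G(\lambda)$ has full row rank $k$ for every $\lambda\in\overline{\mathbb F}_q$. This is the characterization of non-catastrophic codes recalled after the definition of convolutional codes.

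Fix $\lambda\in\overline{\mathbb F}_q$ and consider the last $k$ columns of $G(\lambda)=G_0+\lambda G_1$. Because $G_1=[X\ \mathbf{0}_{k\times k}]$ vanishes in these columns, the $k\times k$ submatrix of $G(\lambda)$ formed by the last $k$ columns equals the corresponding $k\times k$ submatrix of $G_0$ for every $\lambda$. Since $G_0$ is an MDS matrix, all its $k\times k$ minors are nonzero elements of $\mathbb F_q$. In particular, the full-size minor of $G(z)$ on the last $k$ columns is the nonzero constant $\det(G_0^{(\text{last }k)})\in\mathbb F_q^*$. Therefore $G(\lambda)$ has rank $k$ for every $\lambda\in\overline{\mathbb F}_q$, and $G(z)$ is left-prime.

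A short remark completes the argument. Here $G(z)$ has full row rank as a polynomial matrix, which follows from the same nonzero constant minor, so it is a valid generator matrix. Left-primeness is then exactly non-catastrophicity by \cite{York97}, as recalled in the preliminaries. Equivalently, the gcd of the full-size minors of $G(z)$ is $1$, since one of them is a nonzero constant.

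There is no real obstacle in this argument. The only point needing care is to note that $k\le n$ so the last $k$ columns exist, and that the MDS property of $G_0$ gives the nonvanishing of this particular minor.
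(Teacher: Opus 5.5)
Your proposal is correct and is essentially the paper's own argument: the last $k$ columns of $G(\hat z)$ coincide with those of the MDS matrix $G_0$, so $G(\hat z)$ has full rank for every $\hat z\in\overline{\mathbb F}_q$. Your added remarks, that $G(z)$ has full row rank and that the gcd of its full-size minors is $1$, are valid but not needed beyond what the paper states.
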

    
\begin{proof}
For all $\hat{z}\in\overline{\mathbb F}_q$, the last $k$ columns of $G(\hat{z})$ are equal to the last $k$ columns of $G_0$ and hence, $G(\hat{z})$ is full rank for all $\hat{z}\in\overline{\mathbb F}_q$.
\end{proof}
In the construction for partial unit-memory MDP codes from \cite{itw},
\begin{equation}\label{eqn_Matrix_X}
    X:=\begin{bmatrix}   
 \mathrm{Diag}(\alpha, \alpha^{2}, \alpha^{3},...,\alpha^{n-k}) \\
 0_{(2k-n)\times (n-k)}
   \end{bmatrix}
 %
  \in \mathbb{F}_{q^d}^{k \times (n - k)},
\end{equation}
\normalsize
where $\alpha \in \mathbb{F}_{q^d}$ has minimal polynomial over $\mathbb F_q$ of degree $d$.

\begin{theorem}\cite{itw}\label{diag}
Let $k>n-k=\delta$ and $d = \left \lceil\frac{\delta^2-1}{4}\right \rceil + 1$. Let $G_0 \in \mathbb F_{q}^{k\times n}$ be a superregular matrix and $G_1 = [X \ \mathbf{0}_{k \times k}] \in \mathbb{F}_{q^d}^{k \times n}$ with $X$ given by equation \eqref{eqn_Matrix_X}. Then, all nontrivial minors of $G_1^c$
are nonzero, i.e., 
we get an $(n,k,\delta)$-MDP code over $\mathbb F_{q^d}$.
\end{theorem}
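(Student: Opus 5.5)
We will argue exactly as in the proof of Theorem~\ref{thm_MRC}. Every non-trivial full-size minor of $G_1^c$ is viewed as a polynomial over $\mathbb F_q$ in $\alpha$, since all entries of $G_0$ lie in $\mathbb F_q$. We show that such a polynomial has at least one nonzero coefficient and that its maximal and minimal exponents differ by at most $d-1$. Dividing by the lowest power of $\alpha$ then gives a nonzero polynomial of degree $<d$. Because the minimal polynomial of $\alpha$ has degree $d$, this polynomial cannot vanish at $\alpha$. Once this holds, Theorem~\ref{thm:MDP_criterion} gives the MDP property, since $L=1$ (because $k>n-k=\delta$). Lemma~\ref{noncat} gives non-catastrophicity, and the degree is $\delta=n-k$ because $X$ has rank $n-k$.

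\textbf{Expanding the minors.} Write
\[
G_1^c=\begin{bmatrix} G_0 & G_1\\ 0 & G_0\end{bmatrix},
\]
and pick $2k$ columns satisfying the condition in Theorem~\ref{thm:MDP_criterion}(b). Suppose $a$ columns come from the first block and $2k-a$ from the second, with $a\geq k$. Only the first $n-k$ columns of $G_1$ are nonzero, and their only nonzero entries are the diagonal entries $\alpha^j$ in rows $j=1,\dots,n-k$.

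Apply a Laplace expansion along the first $k$ rows. The $x:=a-k$ rows of the top block that are matched with second-block columns must use a set $W$ of columns of $X$. Since $X$ is diagonal, the rows of $G_1$ used are then forced to be the rows indexed by $W$. The minor therefore becomes
\[
\sum_{W}\pm\,\alpha^{\sum_{j\in W} j}\, M_W ,
\]
where $W$ runs over the $x$-subsets of the chosen second-block columns among the first $n-k$, and $M_W$ is a product of two minors of the superregular matrix $G_0$. Non-triviality of the minor guarantees that some $M_W$ is a product of full-size minors of $G_0$, and superregularity (MDS) makes such an $M_W$ nonzero.

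\textbf{Nonvanishing and the degree bound.} Distinct sets $W$ can have the same exponent sum $\sum_{j\in W}j$, so the terms might cancel. To rule this out, I would isolate an extremal term: take $W$ to be the lexicographically smallest admissible subset. This gives a unique minimal exponent, so its coefficient is a single nonzero $M_W$. Choosing the admissible sets cleverly here is where care is needed.

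For the degree bound, let $c$ be the number of chosen second-block columns among the first $n-k$, so $x\le c\le \delta$. The spread of exponents is at most
\[
\Bigl(\sum_{j=c-x+1}^{c} j\Bigr)-\Bigl(\sum_{j=1}^{x} j\Bigr)=x(c-x).
\]
This is maximized at $c=\delta$ and $x\approx \delta/2$, giving $\lfloor \delta/2\rfloor\lceil \delta/2\rceil=\lceil(\delta^2-1)/4\rceil=d-1$.

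\textbf{Main obstacle.} The crux is the uniqueness of the extremal term, i.e.\ showing that the lowest-exponent set is achieved by exactly one admissible $W$ whose $M_W$ is nonzero. Nonzero $M_W$ may occur only for certain $W$ (those compatible with the non-triviality condition on the remaining columns). My first attempt would be to show that the set of admissible $W$ with nonzero $M_W$ is closed under "shifting indices down" in a matroid/Hall sense. Then the minimal-sum $W$ is unique, being the greedy basis of a transversal-type structure.
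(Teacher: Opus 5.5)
Your strategy is the right one: view each non-trivial minor as a polynomial over $\mathbb F_q$ in $\alpha$, show that an extremal term survives, and bound the exponent spread by $d-1$. The paper only cites this result from \cite{itw}, but this is the argument it carries out for Theorem~\ref{thm_MRC}. As written, however, you leave the crux open, and the obstacle you anticipate does not exist.

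Let $A$ be the set of first-block columns, $C\subseteq\{1,\dots,n\}$ the indices of the second-block columns, $C'=C\cap\{1,\dots,\delta\}$ and $x=k-|A|$. Your Laplace expansion reads
\[
\det M=\sum_{W\subseteq C',\ |W|=x}\pm\,\alpha^{\sum_{j\in W}j}\,\det G_0\bigl[\{1,\dots,k\}\setminus W,\ A\bigr]\cdot\det G_0\bigl[\{1,\dots,k\},\ C\setminus W\bigr].
\]
You already noted that $M_W$ is a product of two minors of the superregular matrix $G_0$, and that alone settles the crux. The first factor is an $|A|\times|A|$ minor on the rows outside $W$. It is not full-size, so MDS is not what makes it nonzero; superregularity (every minor nonzero) does, which is exactly why the theorem assumes it. The second factor is a full-size minor.

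Hence $M_W\neq 0$ for every $x$-subset $W\subseteq C'$. There is no restricted family of admissible sets, and no matroid or Hall argument is needed. The minimal exponent is attained only by the $x$ smallest elements of $C'$, with coefficient a single nonzero $\pm M_W$.

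Without this observation, your claim that the lexicographically smallest admissible set gives a unique minimal exponent is unsupported. For an arbitrary family of sets, the lexicographic minimum need not minimize the sum, and the minimal sum need not be attained only once. Non-triviality is used only to guarantee $|C'|\ge x$: at most $k$ of the $k+x$ second-block columns lie in the zero part of $G_1$.

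There are two smaller slips.
\begin{enumerate}
\item The condition $t_{k+1}>n$ means at most $k$ columns come from the first block. So $a\le k$ and $x=k-a$, not $a\ge k$ and $x=a-k$.
\item $x(c-x)$ bounds the spread only when $C'=\{1,\dots,c\}$. For $C'=\{1,3\}$ and $x=1$ the spread is $2>1$. The correct estimate writes $C'=\{c_1<\dots<c_c\}\subseteq\{1,\dots,\delta\}$ and uses $c_{c-x+i}\le\delta-x+i$ and $c_i\ge i$. This gives a spread of $\sum_{i=1}^{x}(c_{c-x+i}-c_i)\le x(\delta-x)\le\lfloor\delta/2\rfloor\lceil\delta/2\rceil=d-1$.
\end{enumerate}
This is the value you reach anyway, so the conclusion stands once these points are fixed.
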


The previous theorem can be seen as special case of the matrix completion problem from Remark \ref{mdp_matrixcompletion} in the sense that we additionally fix $G_0$ to be {\bf superregular} (i.e., a matrix whose every minor is nonzero) and $G_1$ to be of the form 
$[X \ \mathbf{0}_{k \times k}]$ for some matrix $X$ to be determined.
In the following, we present alternative constructions for partial unit-memory MDP codes using a Vandermonde matrix for $G_0$ and a different structure for the matrix $X$. 
These constructions have the advantage that they require smaller finite fields but the disadvantage that we can only cover the special cases $\delta=2$ and $\delta=3$. In particular, the field size for our previous 
construction,
\cite{itw}, is $q = O((n+k)^\delta)$, whereas the following construction has $q=O(n^\delta)$.
For $\beta_1, \dots, \beta_{n-k}\in\mathbb F_{q^{n-k}}$, we define
\vspace{-2mm}
\begin{equation}\label{xhat}
    \hat{X} := \begin{bmatrix}
         0_{(2k-n)\times (n-k)}\\
        T
    \end{bmatrix}\quad\text{with}\quad 
    T:=\begin{bmatrix}
            \beta_1     & 0 &     \cdots     &     0    \\
        \beta_2       & \ddots &   \ddots      &    \vdots      \\
        \vdots        & \ddots &     \ddots     & 0       \\
        \beta_{n - k} & \cdots & \beta_2 & \beta_1 
    \end{bmatrix}.
\end{equation}

\begin{theorem}\label{vander1}
    Let $q\geq n \geq 4$, $k = n - 2$. Let $G_0\in\mathbb F_q^{k\times n}$ be a Vandermonde matrix and $\hat{X}$ a $k \times 2$ matrix as in \eqref{xhat} where $\beta_1, \beta_2 \in \mathbb F_{q^2}$ are linearly independent over $\mathbb F_q$. Then, all nontrivial minors of
    $G_1^c$
    are nonzero, i.e., we get an $(n,k,2)$ MDP code over $\mathbb F_{q^2}$.
\end{theorem}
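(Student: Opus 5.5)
The plan is to use the explicit block structure of
$$G_1^c=\begin{bmatrix} G_0 & G_1\\ 0 & G_0\end{bmatrix}\in\mathbb F_{q^2}^{2k\times 2n},\qquad G_1=[\hat X\ \mathbf 0_{k\times k}],$$
together with a generalized Laplace expansion along the bottom $k$ rows. In $G_1$ only the first two columns are nonzero, and they are nonzero only in rows $k-1$ and $k$, where they carry $T=\begin{bmatrix}\beta_1&0\\ \beta_2&\beta_1\end{bmatrix}$. Since $n\ge 4$ we have $2k-n=k-2\ge 0$, so $\hat X$ is well defined. We show that every minor from Theorem~\ref{thm:MDP_criterion}(b) with $j=1$ is nonzero.

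\textbf{Setup and Laplace expansion.} A nontrivial full-size minor selects $s$ columns $S_1$ from the first block and $2k-s$ columns $S_2$ from the second block, with $s\le k$. The second block has only $n=k+2$ columns, so $s\in\{k-2,k-1,k\}$.

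Expanding along the bottom $k$ rows gives a sum over $k$-subsets $B\subseteq S_2$ of terms
$$\pm\det(G_0|_B)\cdot\det\big(\text{top rows on } S_1\cup(S_2\setminus B)\big).$$
The bottom rows vanish on the first block, so only such $B$ occur. The top-row columns indexed by $S_2\setminus B$ are columns of $G_1$, which are zero unless they are column $1$ or $2$ of the second block. Hence only terms with $S_2\setminus B\subseteq\{1,2\}$ (second-block indexing) survive.

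We will repeatedly use two facts about the Vandermonde matrix $G_0$ with distinct nodes, which is possible since $q\ge n$. Any $k\times k$ minor of $G_0$ is nonzero. Any minor on the first $r$ rows and any $r$ columns is again a Vandermonde determinant, hence nonzero.

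\textbf{Cases $s=k$ and $s=k-2$.} If $s=k$, the matrix is block upper triangular. The minor equals $\det(G_0|_{S_1})\det(G_0|_{S_2})\neq 0$.

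If $s=k-2$, then $S_2$ is the whole second block. The top part of the second block must use exactly columns $1$ and $2$, so there is a single surviving term. In it, the two $G_1$-columns are supported on rows $k-1,k$ with block $T$. Factoring gives
$$\pm\det(T)\cdot\det\big(G_0^{(1..k-2)}|_{S_1}\big)\cdot\det\big(G_0|_{\{3,\dots,n\}}\big)=\pm\beta_1^2\cdot c,$$
where $G_0^{(1..k-2)}$ denotes the first $k-2$ rows of $G_0$ and $c\in\mathbb F_q^*$ by the Vandermonde facts. Since $\beta_1\neq 0$ (it is part of a linearly independent pair), this minor is nonzero.

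\textbf{Case $s=k-1$ (the main obstacle).} Here $|S_2|=k+1$. Because only $k$ columns of the second block lie outside $\{1,2\}$, at least one of the columns $1,2$ belongs to $S_2$, so some surviving term exists. Each surviving term uses exactly one $G_1$-column in the top part, so the minor is a homogeneous linear form $c_1\beta_1+c_2\beta_2$ with $c_1,c_2\in\mathbb F_q$. By the linear independence of $\beta_1,\beta_2$ over $\mathbb F_q$, it suffices to show $(c_1,c_2)\neq(0,0)$.

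The point to check carefully is that distinct surviving terms cannot cancel. We do this by isolating a single term.
\begin{itemize}
\item If column $1$ of the second block lies in $S_2$: $\beta_2$ occurs only in that column, at row $k$. Thus $c_2$ comes from a single expansion term, namely $\pm\det\big(G_0^{(1..k-1)}|_{S_1}\big)\det\big(G_0|_{S_2\setminus\{1\}}\big)$. Both factors are Vandermonde-type minors and hence nonzero, so $c_2\neq0$.
\item Otherwise $S_2\supseteq\{2\}$ and column $1$ is absent: only column $2$, with top part $\beta_1 e_k$, can be used. The minor is then $\pm\beta_1\det\big(G_0^{(1..k-1)}|_{S_1}\big)\det\big(G_0|_{S_2\setminus\{2\}}\big)\neq 0$.
\end{itemize}

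\textbf{Conclusion.} All nontrivial full-size minors of $G_1^c$ are nonzero, so by Theorem~\ref{thm:MDP_criterion} the code has $d_1^c=2\cdot 2+1=5$. Since $L=1$, it is an $(n,n-2,2)$ MDP code over $\mathbb F_{q^2}$. It is also non-catastrophic by Lemma~\ref{noncat}, which applies after permuting columns so that the zero columns of $G_1$ sit where the lemma requires.
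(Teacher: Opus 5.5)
Your proof that all nontrivial minors of $G_1^c$ are nonzero is correct. It uses the same case split as the paper: the number $s\in\{k,k-1,k-2\}$ of first-block columns, with $s=k$ and $s=k-2$ handled essentially identically. The difference lies in $s=k-1$.

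\begin{itemize}
\item \emph{The paper's route.} It separates $(k-1,1,k)$, where $\det M=\det([U\ \hat V])\det(W)$, from $(k-1,2,k-1)$. In the latter it passes to the Schur complement $A-BD^{-1}C$ and computes the reduced column $(0,\ldots,0,\beta_1,\beta_2-\beta_1\langle\tilde d_1,g_1\rangle)^\top$.
\item \emph{Your route.} The Laplace expansion along the bottom $k$ rows treats all $s=k-1$ configurations at once. Since $\beta_2$ occupies a single entry of $M$, its coefficient is one cofactor. That cofactor factors as a leading-rows Vandermonde minor times a $k\times k$ minor of $G_0$.
\item \emph{What each buys.} Your version needs no $D^{-1}$ and makes cancellation visibly impossible. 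It also sharpens the paper's loose ``$a_i\neq0$ for at least one $i$'' to $a_2\neq0$. The paper's explicit reduced determinant is the template it reuses for $\delta=3$ in Theorem~\ref{vander2}. There, in some cases, nonvanishing depends on several coefficients jointly via the minimal polynomial of $z$, and isolating a single coefficient no longer suffices.
\end{itemize}

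Your final paragraph is wrong for $n=4$. ``Since $L=1$'' needs $k>n-k$, i.e.\ $n\geq5$. For $n=4$ you have $k=\delta=2$, so $L=\lfloor 2/2\rfloor+\lfloor 2/2\rfloor=2$, and the code is in fact not MDP. The constant input $u(z)=(1,0)$ gives $v_0=(1,0)G_0$ of weight at most $4$, $v_1=(\beta_1,0,0,0)$ and $v_2=0$. Hence $d_2^c\leq 5<7=(n-k)(L+1)+1$.

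So your minor computation proves the $G_1^c$ statement for all $n\geq4$, but the MDP conclusion only for $n\geq5$. This flaw comes from the statement itself. The paper's proof reaches the same MDP conclusion without checking $L$, even though the paper's standing assumption $k>n-k=\delta$ excludes $n=4$.
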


\begin{proof}
    Choosing integers $i, j, \ell$, we consider any $2k \times 2k$ submatrix $M$ of $G_1^c$ by choosing $i$ columns out of the first $n$ columns, $j$ columns out of the next $n-k$ and $\ell$ columns out of the last $k$ columns. If $M$ corresponds to a non-trivial minor, then
$$
        i +j +\ell = 2k,\quad \text{ where } \quad 0 \leq i \leq k, \  0 \leq j \leq n-k=2, \ 0\leq\ell\leq k.
$$
Then, we can write $M$ as $2k \times 2k$ block matrix of the form
$$
    M= \begin{bmatrix}
        U & V \\
        \mathbf{0} & W\\    
    \end{bmatrix}
    \quad\text{with}\quad V=[\hat{V}\ 0_{k\times\ell}], 
$$
    where $U$ is a submatrix of $G_0$ of size $k \times i$, $V$ is a submatrix of $G_1$ of size $k \times (j+\ell)$, and $W$ is a submatrix of $G_0$ of size $k \times (j+\ell)$. We need to show that $\det(M) \neq 0$. Since $n-k=2$, there exist only 3 choices for $i$, namely $i\in\{k, k-1, k-2\}$ and these are considered as follows:   
    \begin{itemize}
        \item[(i)] If $i=k$, then   $(i,j, \ell) = (k,j, k-j)$ for any $j\in\{0,1,2\}$. Since $U$ and $W$ are $k\times k$ submatrices of the MDS matrix $G_0$, one obtains that $\det(M) \neq 0$.
        
            
        \item[(ii)] If $i=k-1$, then  $(i,j,\ell) \in \{(k-1,1,k),(k-1,2,k-1)\}$.
        If we consider the subcase $(k-1,1,k)$, then $\det(M)=\det([U\ \hat{V}])\cdot\det(W)$ where $\det(W)\neq 0$ since it is a $k \times k$ minor of the MDS matrix $G_0$. Now, to see  $\det([U\ \hat{V}])\neq 0$, we choose exactly $1$ column from $X$. If we choose the first column, we obtain $\det([U\ \hat{V}])= a_1 \beta_1 + a_2\beta_2$ with $a_1,a_2\in\mathbb F_q$ being $(k-1)\times (k-1)$ minors of $G_0$. Since $G_0$ is Vandermonde, we obtain that $a_i\neq 0$ for (at least) one $i\in\{1,2\}$ and since $\beta_1, \beta_2$ are linearly independent over $\mathbb F_q$, we obtain that $\det(M) \neq 0$. On the other hand, if we choose the second column of $X$, then $\det([U\ \hat{V}]) = a_1 \beta_1$, where $a_1$ is the determinant of a $(k-1) \times (k-1)$ Vandermonde matrix obtained by deleting the last row of $U$ and hence $a_1 \neq 0$.
        Therefore, $\det(M) \neq 0$. 
      
        Now, in the subcase $ (k-1,2, k-1)$,  one has $M=\begin{bmatrix}
    A  & B\\ C & D
    \end{bmatrix}$ with $A=\begin{bmatrix}
    \tilde{g}_1 & \cdots& \tilde{g}_{k-1} & x_{1} \end{bmatrix}$, where the $\tilde{g}_i$ are different columns from $G_0$ and $x_{1}$ is the first column of $X$, $B=\begin{bmatrix} x_{2} & 0 & \cdots & 0\end{bmatrix}$ with $x_{2}$ being the second column of $X$, $C=\begin{bmatrix}
        0, & \cdots & 0 & g_{1} \end{bmatrix}$ with $g_{1}$ being the first column of $G_0$ and $D=\begin{bmatrix}
                g_{2}& g_{i_3}& \cdots & g_{i_{k+1}}
            \end{bmatrix}$ with $g_{i_j}$ the $i_j$-th column of $G_0$ and $3 \leq i_3 <\hdots <i_{k+1} \leq n$. Note that $\det(D) \neq 0$ and thus, $\det(M) \neq 0$ if and only if 
 $\det(A-BD^{-1}C)$ is nonzero. Moreover, 
 $$
A-BD^{-1}C=\begin{bmatrix}
    \tilde{g}_1& \hdots & \tilde{g}_{k-1}&  \begin{bmatrix}
    0\\ \vdots\\ 0\\ \beta_1\\ \beta_2\\ 
    \end{bmatrix}-\begin{bmatrix}
    0\\ \vdots\\ 0\\ 0\\ \beta_1 \langle\tilde{d_1}, g_{1}\rangle
    \end{bmatrix} 
\end{bmatrix}, 
$$
    where $\tilde{d_1}$ is the first row of $D^{-1}$. Hence, $\det(A-BD^{-1}C)=a_1 \beta_{1}+ a_2 \beta_2$ with $a_2\in\mathbb F_q^\ast$, being a $(k-1) \times (k-1)$ minor of the Vandermonde matrix $G_0$. Since, $\beta_1$ and $\beta_2$ are independent over $\mathbb F_q$, we obtain $\det(M)\neq 0$. 

    \item[(iii)] If $i=k-2$, then the only possibility for $(i,j,\ell)$ is $(k-2, 2, k)$. Then, $\det(M)=\det([U\ \hat{V}])\cdot\det(W)$ where $\det(W)\neq 0$ since it is a $k \times k$ minor of the MDS matrix $G_0$. One obtains that $\det([U\ \hat{V}])= a\beta_1^2$ where $a$ is the determinant of a $(k-2)\times (k-2)$ Vandermonde matrix obtained by deleting the last two rows of $U$, i.e. $a\in\mathbb F_q^\ast$. Thus, $\det(M) \neq 0$. 
    \end{itemize}
Thus, $\det(M) \neq 0$ for all choices and we get $(n,k,2)$-MDP code over $\mathbb F_{q^2}$. 
\end{proof}

\begin{theorem}\label{vander2}
    Let $q\geq n$, $k=n-3$ and $G_0\in\mathbb F_q^{k\times n}$ be a Vandermonde matrix. Let $G_1=[\hat{X}\ 0_{k\times k}]$ with $\hat{X}$ as in \eqref{xhat} with $\beta_1=z$, $\beta_2=z^2$, $\beta_3=1$ where $z\in\mathbb F_{q^3}\setminus\mathbb F_q$ such that the minimal polynomial of $z$ has constant coefficient unequal to $-1$. 
    Then, all nontrivial minors of
     $G_1^c$
    are nonzero, i.e., we get an $(n,k,3)$ MDP code over $\mathbb F_{q^3}$.
\end{theorem}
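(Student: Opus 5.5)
The plan follows the proof of Theorem~\ref{vander1}. Consider a $2k\times 2k$ submatrix $M$ of $G_1^c$ formed by $i$ of the first $n$ columns, $j$ of the next $n-k=3$ columns and $\ell$ of the last $k$ columns. Non-triviality forces $i+j+\ell=2k$, $i\le k$, $j\le 3$, $\ell\le k$, so $i\in\{k,k-1,k-2,k-3\}$. Write
\[
M=\begin{bmatrix}U & V\\ \mathbf 0 & W\end{bmatrix},\qquad V=[\hat V\ 0].
\]
\begin{itemize}
\item If $\ell=k$, then $\det M=\det([U\ \hat V])\det W$, where $W$ is a $k\times k$ minor of the MDS matrix $G_0$.
\item If $\ell<k$, then some of the first $n-k$ columns of the second block column $[G_0]$ also appear in $W$. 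We eliminate them with a Schur complement: take $D$ as a $k\times k$ invertible block of $W$ containing the last $\ell$ chosen columns together with enough of the $j$ middle columns, and show that $\det M\neq0$ is equivalent to $\det(A-BD^{-1}C)\neq 0$.
\end{itemize}

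In every case the determinant reduces to a polynomial expression in $z$ with coefficients in $\mathbb F_q$, coming from minors of $G_0$ and of $D^{-1}$. Since $z$ has degree $3$ over $\mathbb F_q$, it suffices to show two things. First, whenever the expression is $\mathbb F_q$-linear in $1,z,z^2$, some coefficient is nonzero. Second, when it has degree $\ge 3$ in $z$, it does not vanish modulo the minimal polynomial $p(z)=z^3+c_2z^2+c_1z+c_0$.

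The structure of $T$ (lower triangular Toeplitz with first column $(z,z^2,1)^\top$), sitting in the last three rows, is the key tool. The $\ell=k$ cases follow the pattern of Theorem~\ref{vander1}.
\begin{itemize}
\item With one column of $\hat X$ chosen, $\det([U\ \hat V])$ is a combination of $z,z^2,1$ (or of a truncated subset of these), with coefficients equal to $(k-1)\times(k-1)$ minors of the Vandermonde matrix. Deleting a last row of a Vandermonde matrix gives a nonzero minor, so the coefficient of the lowest-placed entry is nonzero. Independence of $1,z,z^2$ then gives $\det\neq0$.
\item With two or three columns, the Laplace expansion along the last rows gives a polynomial in $z$. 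Its "extreme" term comes from the $2\times2$ or $3\times 3$ minors of $T$ multiplied by a Vandermonde minor obtained by deleting the bottom rows, which is nonzero.
\end{itemize}
For three columns, $\det T=z^3$ times a nonzero Vandermonde minor, so it is nonzero. For two columns, the $2\times 2$ minors of $T$ in rows $\{r,s\}$ are $z^2$, $z^3$, and $z\cdot z - 1\cdot z^2$-type expressions. Here some reduce to $z^4-z$ or similar, and I would compute them explicitly.

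The hypothesis that the constant term of $p$ is not $-1$ presumably excludes $z^3=-1$-type cancellations. For instance, the minor rows $\{2,3\}$, columns $\{1,2\}$ of $T$ gives $z^2\cdot z - 1\cdot z = z^3-z$, but combinations like $z^3+1$ arise when Vandermonde coefficients coincide. Reducing these via $z^3=-c_2z^2-c_1z-c_0$, the nonvanishing will hinge on $c_0\neq -1$.

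The main obstacle is the mixed cases with $\ell<k$, for example $(i,j,\ell)=(k-1,2,k-1)$, $(k-1,3,k-2)$, $(k-2,3,k-1)$ and $(k-3,3,k)$. There the Schur complement correction $BD^{-1}C$ adds $\mathbb F_q$-multiples of $z$-entries to the last rows, exactly as the term $\beta_1\langle\tilde d_1,g_1\rangle$ did in Theorem~\ref{vander1}. My first approach would be to show that the correction only modifies entries strictly below the "leading" position, because $B$ has its nonzero entries only in the rows of $T$, and $T$ is lower triangular. The coefficient of the highest power of $z$ (or of the independent basis element) is then unchanged, and it equals a nonzero Vandermonde minor times a power of $z$. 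Where the degree reaches $3$ or $4$, I would reduce modulo $p$ and check the resulting $\mathbb F_q$-coordinates. This is where I expect the condition $c_0\neq-1$ to be needed, namely to rule out a combination like $z^3+1$ (i.e.\ $1+z^3$ from $\beta_3+\beta_1^3$) vanishing.
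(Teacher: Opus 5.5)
Your overall architecture is the same as the paper's: split by $(i,j,\ell)$, factor block-triangularly when $\ell=k$, take a Schur complement when $\ell<k$, then use either linear independence of $1,z,z^2$ or a minimal-polynomial argument. Your one-column and three-column cases are fine. The gap is in the one delicate step, which you defer to ``I would compute them explicitly'' and for which your guess is wrong.

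\textbf{The missing computation.} The minor of $T$ in rows $\{2,3\}$, columns $\{1,2\}$ is $\det\begin{bmatrix} z^2 & z\\ 1 & z^2\end{bmatrix}=z^4-z$, not $z^3-z$.
\begin{itemize}
\item In case $(k-2,2,k)$ with the first two columns of $\hat X$, Laplace expansion gives $\det[U\ \hat V]=a(z^4-z)+cz^3+bz^2$. Here $a\in\mathbb F_q^\ast$ is the Vandermonde minor on the first $k-2$ rows, and $b,c\in\mathbb F_q$ are minors of $G_0$ that you do not control.
\item The same form appears in case $(k-2,3,k-1)$ after the Schur complement. There the correction replaces $x_1,x_2$ by $x_1-\lambda_1x_3$ and $x_2-\lambda_2x_3$, with $\lambda_1,\lambda_2\in\mathbb F_q$ and $x_3=(0,\dots,0,z)^\top$. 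This only changes the coefficients of $z^2$ and $z^3$.
\end{itemize}
The argument then runs through the factorization
\[
a(z^4-z)+cz^3+bz^2 \;=\; a z\Bigl(z^3+\tfrac{c}{a}z^2+\tfrac{b}{a}z-1\Bigr).
\]
A monic cubic over $\mathbb F_q$ vanishing at $z$ would have to be the minimal polynomial of $z$. This cubic has constant term $-1$, which is exactly what the hypothesis excludes.

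\textbf{Why your proposed mechanisms do not reach this.}
\begin{itemize}
\item Knowing that the leading $z^4$ coefficient is an unchanged nonzero Vandermonde minor proves nothing for a quartic evaluated at an element of degree $3$. For example, $z\,m_z(z)=0$, where $m_z$ is the minimal polynomial of $z$. What you need is that the $z^4$ and $z^1$ coefficients are tied together as $a$ and $-a$. Both come from the single minor $z^4-z$, while the other Laplace terms and the Schur correction only affect the $z^2$ and $z^3$ coefficients.
\item Reducing modulo $p$ and ``checking coordinates'' does not help on its own either, because $b$ and $c$ are arbitrary.
\item The combination $z^3+1$ that you expect the hypothesis to rule out never arises. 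It also has constant term $+1$, not $-1$. It is reducible, since $-1$ is a root, so it could never be the minimal polynomial of a cubic element, and the hypothesis would be irrelevant to it.
\end{itemize}

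\textbf{A smaller slip.} The case $(k-3,3,k)$ has $\ell=k$, not $\ell<k$. It is block triangular and gives $\pm a z^3$ directly, with no Schur complement needed.
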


\begin{proof}
For integers $i, j, \ell$, we consider any $2k \times 2k$ submatrix $M$ of $G_1^c$ by choosing $i$ columns out of the first $n$ columns, $j$ columns out of the next $n-k$ and $\ell$ columns out of the last $k$ columns. If $M$ corresponds to a non-trivial minor, then
    $i +j +\ell = 2k$ where $1 \leq i \leq k, \  0 \leq j \leq n-k=3$ and $0 \leq \ell \leq k$. Thus, we can write 
$$
 M= \begin{bmatrix}
    U & V \\
    \mathbf{0} & W\\
\end{bmatrix}
\quad\text{with}\quad V=[\hat{V}\ 0_{k\times\ell}], 
$$
where $U$ is a submatrix of $G_0$ of size $k \times i$, $V$ is a submatrix of $G_1$ of size $k \times (j+\ell)$, $W$ is a submatrix of $G_0$ of size $k \times (j+\ell)$. 
For fixed values of $(i,j,\ell)$, we need to show that $\det(M) \neq 0$. We consider different cases, for which we use the abbreviation $(i,j,\ell)$.\\
\underline{Case $(k,j, k-j)$}: For any $j\in\{0,1,2,3\}$,
this case is clear from the fact that $G_0$ is an MDS matrix.

Now, we consider the case when  $\ell=k$. In this case, $$\det(M)=\det([U\ \hat{V}])\cdot\det(W)$$ where $\det(W)\neq 0$ since it is a fullsize minor of the MDS matrix $G_0$. To see that also $\det([U\ \hat{V}])\neq 0$, i.e. to see that $[G_0\ \hat{X}]$ is an MDS matrix, we consider different cases.\\
\underline{Case $(k-1,1,k)$}: One  obtains
$$\det([U\ \hat{V}])=a_0+a_1z+a_2z^2$$ with $a_0,a_1,a_2\in\mathbb F_q$ being $(k-1)\times (k-1)$ minors of $G_0$. Since $G_0$ is Vandermonde, one has  that $a_i\neq 0$ for (at least) one $i\in\{1,2,3\}$,
and since $1,z,z^2$ are linearly independent over $\mathbb F_q$ 
the corresponding minor is nonzero.\\
\underline{Case $(k-3,3,k)$}: One obtains that $\det([U\ \hat{V}])=
az^3$ with
$a\in\mathbb F_q^\ast$.\\
\underline{Case $(k-2,2,k)$}: We have to distinguish three sub-cases corresponding to the three choices for choosing two out of the three columns of $\hat{X}$.

If we choose the last two columns of $\hat{X}$, $\det([U\ \hat{V}])=bz^2$ where $b$ is the determinant of a $(k-2)\times (k-2)$ Vandermonde matrix, i.e., $b\in\mathbb F_q^\ast$.

If we choose the first and the last column of $\hat{X}$, $$\det([U\ \hat{V}])=cz^3+dz^2=z^2(cz+d)$$ where $c$ is the determinant of a $(k-2)\times (k-2)$ Vandermonde matrix, i.e., $c\in\mathbb F_q^\ast$ and hence, $\det([U\ \hat{V}])\neq 0$.

If we choose the first two columns of $\hat{X}$, $$\det([U\ \hat{V}])=a(z^4-z)+bz^2+cz^3$$ 
where $a$ is the determinant of a $(k-2)\times (k-2)$ Vandermonde matrix, i.e.,  $a\in\mathbb F_q^\ast$. Thus, $\det([U\ \hat{V}])\neq 0$ if and only if $p(z)=z^3+\frac{c}{a}z^2+\frac{b}{a}z-1\neq 0$, which is true because the minimal polynomial of $z$ has degree $3$ and cannot be $p(z)$.
\\
\underline{Case $(k-1,2,k-1)$}: $M=\begin{bmatrix}
    A  & B\\ C & D
\end{bmatrix}$
with $A=\begin{bmatrix}
    \tilde{g}_1& \cdots & \tilde{g}_{k-1}& x_{i_1}\end{bmatrix}$, where the $\tilde{g}_i$ are different columns from $G_0$ and $x_{i_1}$ is 
column $i_1$ of $\hat{X}$,
$B=\begin{bmatrix}
    x_{i_2} & 0& \cdots& 0\end{bmatrix}$ with $x_{i_2}$ being column $i_2$ from $\hat{X}$,
$C=\begin{bmatrix}
    0& \cdots& 0& g_{i_1}\end{bmatrix}$ with $g_{i_1}$ the $i_1$-th column of $G_0$ and $D=\begin{bmatrix}
        g_{i_2}& g_{i_3}& \cdots& g_{i_{n-2}}    \end{bmatrix}$ with $g_{i_j}$ the $i_j$-th column of $G_0$, $1\leq i_1<i_2<\cdots<i_{n-3}<i_{n-2}\leq n$.
In the following, we will use that
$\det\begin{bmatrix}
   A & B\\ C & D
\end{bmatrix}$ is nonzero if and only if $\det(A-BD^{-1}C)$ is nonzero.

If $i_1=2$, i.e. $i_2=3$, then for $$v^{T} = \begin{bmatrix}
    0 & \cdots & 0 & z & z^2 - z \langle\tilde{d},g_{i_1}\rangle
\end{bmatrix}$$ where $\tilde{d}$ is the first row of $D^{-1}$, one has 
$$
A-BD^{-1}C=\begin{bmatrix}
    \tilde{g}_1 & \cdots &  \tilde{g}_{k-1} & v 
\end{bmatrix}.
$$ 
Hence, $\det(A-BD^{-1}C)=az^2+bz$ with
$a\in\mathbb F_q^\ast$. Therefore, $\det(M) \neq 0$. 

Now, if $i_1=1$, $i_2=3$, for $$v_1^{T} = \begin{bmatrix}
    \ 0  & \cdots & 0 &  z & z^2 &  1-z\langle\tilde{d}, g_{i_1}\rangle\ 
\end{bmatrix},$$ we get
$A-BD^{-1}C=\begin{bmatrix}
    \tilde{g}_1 & \hdots & \tilde{g}_{k-1} & v_1
\end{bmatrix}.$ Hence, 
$$\det(A-BD^{-1}C)=az^2+bz+c
$$ with
$c\in\mathbb F_q^\ast$. 
Further, if $i_1=1$, $i_2=2$, then for 
$$v_2^{T} = \begin{bmatrix}
    \ 0   &\cdots & 0 &  z & z^2 -z\langle\tilde{d}, g_{i_1}\rangle &  1-z^2\langle\tilde{d}, g_{i_1}\rangle \ 
\end{bmatrix},$$
we get $A-BD^{-1}C=\begin{bmatrix}   
\tilde{g}_1& \cdots&  \tilde{g}_{k-1}& v_2
\end{bmatrix}.$
Thus, $$\det(A-BD^{-1}C)=az^2+bz+c$$ with
$c\in\mathbb F_q^\ast$, which implies that $\det(M) \neq 0$ for all possible choices of 2 out of 3 columns of $\hat{X}$.\\
\underline{Case $(k-1,3,k-2)$}: We have $M=\begin{bmatrix}
    A  & B\\ C & D
\end{bmatrix}$
with $A=\begin{bmatrix}
    \tilde{g}_1 & \cdots & \tilde{g}_{k-1} & x_{1}\end{bmatrix}$, where the $\tilde{g}_i$ are different columns from $G_0$,
$B=\begin{bmatrix}
    x_{2}& x_3&  0& \cdots & 0\end{bmatrix}$,
$C=\begin{bmatrix}
    0& \cdots& 0& g_{1}\end{bmatrix}$
and $D=\begin{bmatrix}
    g_{2}& g_{3}& g_{i_4} &\cdots & g_{i_{n-2}}
\end{bmatrix}$ with $g_{i_j}$ the $i_j$-th column of $G_0$, $4\leq i_4<\cdots<i_{n-2}\leq n$.
%
Let $\tilde{d}_1$ and $\tilde{d}_2$ be the first and second row of $D^{-1}$, respectively. Then, $$v^{T}=\begin{bmatrix}
    \ 0 & \cdots &  0 &  z &  z^2-z\langle\tilde{d}_1, g_1\rangle &  1-z^2\langle\tilde{d}_1, g_1\rangle-z\langle\tilde{d}_2, g_1\rangle\ 
\end{bmatrix}.$$
Then, $A-BD^{-1}C=\begin{bmatrix}
    \tilde{g}_1& \cdots &  \tilde{g}_{k-1}&  v\end{bmatrix}$ 
 and hence, 
 $$\det(A-BD^{-1}C)=az^2+bz+c$$ with
$c\in\mathbb F_q^\ast$. \\
\underline{Case $(k-2,3,k-1)$}: Now, $M=\begin{bmatrix}
    A  & B\\ C & D
\end{bmatrix}$
with $A=\begin{bmatrix}
    \tilde{g}_1& \cdots&  \tilde{g}_{k-2}& x_{1}& x_2\end{bmatrix}$, where the $\tilde{g}_i$ are different columns from $G_0$,
$B=\begin{bmatrix}
    x_3 & 0& \cdots& 0\end{bmatrix}$,
$C=\begin{bmatrix}
    0 & \cdots & 0& g_{1}& g_2\end{bmatrix}$
and $D=\begin{bmatrix}
    g_{3}& g_{i_4}& \cdots& g_{i_{n-2}}\end{bmatrix}$ with $g_{i_j}$ the $i_j$-th column of $G_0$, $4\leq i_4<\cdots<i_{n-2}\leq n$.
%
For 
\begin{align*}
v_1^{T} &= \begin{bmatrix}
     0 &  \cdots & 0 &  z & z^2 &  1-z\langle\tilde{d}_1, g_1\rangle 
\end{bmatrix},\\
v_2^{T} &= \begin{bmatrix}
    0 &  \cdots &  0 &  0 &  z &  z^2-z\langle\tilde{d}_1, g_2\rangle
\end{bmatrix},
\end{align*}
$A-BD^{-1}C=\begin{bmatrix}
    \tilde{g}_1 & \cdots & \tilde{g}_{k-2} &  v_1 & v_2\end{bmatrix}$ and $$\det(A-BD^{-1}C)=a(z^4-z)+bz^2+cz^3$$ with $a\in\mathbb F_q^\ast$.
This is nonzero if and only if $p(z)=z^3+\frac{c}{a}z^2+\frac{b}{a}z-1\neq 0$, which is true since the minimal polynomial of $z$ has degree $3$ and is not equal to $p(z)$.

Thus, for all possible choices of $(i,j,\ell)$, we have obtained $\det(M) \neq 0$, as desired and hence, we get an MDP code.
\end{proof}

Combining the previous theorems and Corollary \ref{dual} and Lemma \ref{noncat} we obtain the following corollary.

\begin{corollary}
Let $k\in\{2,3\}$ and let $H_0\in\mathbb F_q^{(n-k)\times n}$ be a Vandermonde matrix. Let $H_1=[\hat{X}\ 0]$, where $\hat{X}$ is as in Theorem \ref{vander1} for $k=2$ and $\hat{X}$ is as in Theorem \ref{vander2} for $k=3$.
Then, $H(z)=H_0+H_1z$ is the parity-check matrix of an $(n,k,\delta)$ MDP code with $k=\delta$.
\end{corollary}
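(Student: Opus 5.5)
The plan is to reduce the corollary to Theorems \ref{vander1} and \ref{vander2} via duality (Corollary \ref{dual}), with Lemma \ref{noncat} supplying the left-primeness that Corollary \ref{dual} needs. Throughout, write $n$ for the code length and $\delta$ for the degree. The $(n-k)\times n$ matrix $H(z)=H_0+H_1z$ will be regarded as the generator matrix $G(z)$ of an auxiliary convolutional code $\mathcal{D}$ with parameters $(n, n-k, \delta)$.

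\emph{Step 1 (choice of parameters).} Set $k'=n-k$, so that $n-k'=k\in\{2,3\}$. In this notation $H_0$ is a $k'\times n$ Vandermonde matrix and $H_1=[\hat X\ 0_{k'\times k'}]$, where $\hat X$ is the $k'\times (n-k')$ matrix from \eqref{xhat}, with $\beta_1,\beta_2$ linearly independent over $\mathbb F_q$ when $n-k'=2$ and with $\beta=(z,z^2,1)$ when $n-k'=3$. This is exactly the data in Theorem \ref{vander1} (respectively Theorem \ref{vander2}) with $k$ replaced by $k'$.

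\emph{Step 2 (MDP property of $\mathcal{D}$).} We need to check the hypotheses of those theorems, namely $q\ge n$, $n\ge 4$ in the case $k=2$, and the implicit requirement $k'>n-k'$ so that $L=1$ as in the paper's standing assumption. Once these hold, Theorem \ref{vander1} or \ref{vander2} gives that every nontrivial full-size minor of the sliding matrix built from $H_0,H_1$ is nonzero. Hence $\mathcal{D}$ is an $(n,k',\delta)$ MDP code with $\delta=n-k'=k$. This is consistent because $\operatorname{rk}(H_1)=n-k'$ and the degree equals $\delta$.

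\emph{Step 3 (left-primeness).} The matrix $H_0$ is MDS, being Vandermonde with distinct nodes since $q\ge n$, and $H_1$ has the form $[X\ 0_{k'\times k'}]$. So Lemma \ref{noncat} shows that $H(z)$ is left-prime, i.e.\ a non-catastrophic generator matrix of $\mathcal{D}$.

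\emph{Step 4 (duality).} Corollary \ref{dual} now applies to the left-prime generator matrix $H(z)$ of the $(n,k',\delta)$ MDP code $\mathcal{D}$. It says that $H(z)$ is a parity-check matrix of an $(n,n-k',\delta)=(n,k,\delta)$ MDP code, and here $k=\delta$, as claimed.

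The main obstacle is bookkeeping rather than new mathematics, and it sits mostly in Step 2. The roles of $k$ and $n-k$ are swapped relative to the theorems, so I must confirm that the hypotheses of Theorems \ref{vander1} and \ref{vander2} carry over: the condition $k'>\delta$ means $n>2k$ must be assumed implicitly, and in the $k=3$ case $z$ must satisfy the minimal-polynomial condition. I must also confirm that $L=1$ holds for both $\mathcal{D}$ and its dual, so that the sliding-matrix criterion of Theorem \ref{thm:MDP_criterion} with $j=L=1$ is the right one. For the dual this is automatic from Corollary \ref{dual}, which is stated for the full MDP property.
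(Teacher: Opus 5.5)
Your proposal is correct and is the paper's own argument: apply Theorems~\ref{vander1} and~\ref{vander2} with the roles of $k$ and $n-k$ swapped, use Lemma~\ref{noncat} for left-primeness, and then dualize via Corollary~\ref{dual}. Your remarks on the tacit hypotheses are apt, since the corollary as stated leaves them unstated: $q\ge n$, and $n>2k$, which is needed both for $L=1$ and for the zero block of $\hat X$ to make sense.
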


\begin{remark}
MDP codes with the same parameters as in the previous corollary, i.e. $(n,k,\delta)$ with $k=\delta\in\{2,3\}$, have recently also been constructed in \cite{cheng2026new}. The required field size for this construction in \cite{cheng2026new} is $q=O(n^{\delta-1})$, which is asymptotically better than $q=O(n^\delta)$ what was the state of the art before this paper and what is also the asymptotically required field size for our construction. However, the actual minimal field size in \cite{cheng2026new} is the smallest prime $p$ with $p\geq 3n-2$ for $k=2$ and the smallest prime $p$ with $p\geq 307 n^2$ for $k=3$. In contrast for our construction it is $q^2$ where $q$ is the smallest prime power with $q\geq n$ for $k=2$ and $q^3$ for $k=3$. Hence, for $k=3$, except for very small rates our construction is MDP over smaller fields than the construction from \cite{cheng2026new}.
\end{remark}


\begin{theorem}\cite{itw}
The encoding of the MDP codes from Theorem \ref{diag}, Theorem \ref{vander1} and Theorem \ref{vander2} requires $O(n\log^2(n))$ multiplications.
\end{theorem}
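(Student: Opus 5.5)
The plan is to bound the cost of computing $v(z)=u(z)G(z)$ one time step at a time. For $G(z)=G_0+G_1z$ the output at time $t$ is $v_t=u_tG_0+u_{t-1}G_1$, so it suffices to show that both $u_tG_0$ and $u_{t-1}G_1$ cost $O(n\log^2 n)$ multiplications in the extension field. The final count is then the sum of these two costs plus $n$ additions to combine them.

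For the term $u_tG_0$, I split according to the structure of $G_0$ in each construction.
\begin{itemize}
\item In Theorems \ref{vander1} and \ref{vander2}, $G_0$ is a $k\times n$ Vandermonde matrix over $\mathbb F_q$. Computing $u_tG_0$ means evaluating the polynomial with coefficient vector $u_t$ (degree $<k\le n$) at the $n$ nodes of the Vandermonde matrix. This is multipoint evaluation, which takes $O(n\log^2 n)$ multiplications by the standard subproduct-tree algorithm. The arithmetic is over $\mathbb F_{q^d}$ because $u_t$ has entries there, but the tree polynomials built from the nodes have coefficients in $\mathbb F_q$.
\item In Theorem \ref{diag}, $G_0$ is a superregular matrix over $\mathbb F_q$. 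I would take it to be a Cauchy matrix, as in the MR-LRC section. Then $u_tG_0$ is a Cauchy matrix--vector product, which costs $O(n\log^2 n)$ multiplications by the result cited there \cite{doi:10.1137/0908017}.
\end{itemize}
One caveat is that the statement says only ``superregular'' for Theorem \ref{diag}. The proof therefore has to fix the Cauchy choice, which is a standard family of superregular matrices, or else note that the complexity claim refers to that choice.

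For the term $u_{t-1}G_1$, I use that $G_1=[X\ \mathbf 0]$ or $[\hat X\ \mathbf 0]$.
\begin{itemize}
\item For $X$ as in \eqref{eqn_Matrix_X}, $X$ is diagonal on top of zeros. So $u_{t-1}G_1$ needs only $n-k$ multiplications by the fixed powers $\alpha^i$.
\item For $\hat X$ as in \eqref{xhat}, only the lower Toeplitz block $T$ of size $(n-k)\times(n-k)$ is nonzero. Multiplying a vector by $T$ is a truncated polynomial product. Since $n-k=\delta\in\{2,3\}$ is constant, this costs $O(1)$ multiplications; in general it would be $O((n-k)\log(n-k))$.
\end{itemize}
In every case this term is $O(n)$.

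The only real obstacle is the $G_0$ step: the bound has to hold uniformly over the three constructions. The Vandermonde and Cauchy cases are covered by classical fast algorithms, so once $G_0$ is fixed to one of these structured forms, the rest is bookkeeping and the total is $O(n\log^2 n)$ multiplications per time step.
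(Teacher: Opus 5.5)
Your proposal is correct and follows the same approach the paper intends: the paper cites \cite{itw} for the codes of Theorem~\ref{diag} and says the Vandermonde cases follow ``in a similar way''. You spell this out as a per-time-step cost for $v_t=u_tG_0+u_{t-1}G_1$. The $G_0$ term is a fast structured matrix--vector product (Cauchy, as in the paper's MR-LRC encoding count, or fast multipoint evaluation for Vandermonde), and the $G_1$ term costs $O(n)$ because of the diagonal $X$ or the constant-size Toeplitz block $T$ with $\delta\in\{2,3\}$. Your caveat is apt: for Theorem~\ref{diag} the bound holds only for a structured superregular $G_0$ such as a Cauchy matrix, not for an arbitrary superregular one, and the paper's citation implicitly relies on such a choice.
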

 \begin{proof}
 For the codes of Theorem \ref{diag} this has been shown in \cite{itw}  and for the other codes it can be shown in a similar way.
  \end{proof}

\section{Conclusion}
We provided constructions for MR-LRCs with UL and MDP codes with similar building blocks, i.e., MDS matrices over some base field and diagonal/lower-triangular matrices over an extension field, using similar proof strategies.
However, even if the two corresponding matrix completion problems 
both require to construct a matrix with a given zero pattern such that all non-trivial full-size minors of this matrix are nonzero, the differences in the constraints 
cause 
the building blocks to be arranged in a different way. 
Thus, 
independent proofs are required and we do not have a direct way of obtaining an MDP code from  MR-LRCs with UL and the other way round. 
Hence, the results of this paper can be seen as a first step towards reaching the goal mentioned in the Introduction, i.e. to develop principled matrix completion frameworks that apply uniformly across a wide range of constraint patterns. As a next step, we aim to develop constructions for MDP codes with $L>1$ and MR-LRCs with UL without the restriction on the number of global parities.


For future research we also want to investigate to which extent MDP codes can be seen as MR-LRCs with UL with hierarchical locality \cite{NL19, RB25}.
Moreover, in \cite{umbertodiego} partial MDP codes were considered, 
where the imposed locality constraints prohibit the codes to be MDP, i.e.\ the column distances can only reach a modified upper bound adapted to the locality setting, and codes reaching these bounds are called partial MDP.
Hence, it is a natural question to investigate whether our techniques can also be used for the construction of partial MDP codes.



\bibliographystyle{alpha}
\bibliography{wcc.bib} 

\end{document}